\let\emptyset\varnothing
\newcounter{compressEnum}
\newtheorem{conj}{Conjecture} 
\newtheorem{theo}{Theorem} 
\newtheorem{lemm}{Lemma}
\newtheorem{coro}{Corollary}
\newtheorem{defi}{Definition}
\let\c@lemm\c@theo
\let\c@coro\c@theo
\let\c@defi\c@theo
\let\c@assu\c@theo
\def\abs#1{\ensuremath{\lvert #1\rvert}}
 \newcommand{\B}{{\mathcal{B}}}
\newcommand{\Reach}{\mathsf{Reach}}
\newcommand{\target}{{\cal T}}
\newcommand{\Safe}{\mathsf{Safe}}
\newcommand{\Parity}{\mathsf{Parity}}
\newcommand{\nat}{\mathbb N} 
\newcommand{\tuple}[1]{\langle #1 \rangle}
\newcommand{\Obs}{\mathcal{O}}
\newcommand{\obs}{\mathsf{obs}}
\newcommand{\head}{\mathsf{head}}
\newcommand{\z}{\mathsf{z}}
\newcommand{\Inf}{\mathsf{Inf}}
\newcommand{\A}{\mathcal{A}}
\newcommand{\true}{{\sf true}}
\newcommand{\false}{{\sf false}}
\newcommand{\Prb}{\mathrm{Pr}}
\newcommand{\post}{\mathsf{post}}
\newcommand{\straa}{{\sigma}}
\newcommand{\Straa}{{\Sigma}}
\newcommand{\Alphabet}{{\Sigma}}
\newcommand{\dist}{{\cal D}}
\renewcommand{\partial}{{\sf partial}}
\newcommand{\Uniform}{{\sf Uniform}}
\title{{\bf Games with a Weak Adversary}\thanks{This research was partly
supported by Austrian Science Fund (FWF) Grant No P23499- N23, FWF NFN Grant No S11407-N23 (RiSE), ERC Start grant (279307: Graph Games),
Microsoft Faculty Fellowship Award, and European project Cassting (FP7-601148).}%$^,$\thanks{Fuller version:~\cite{}.}
}
\author{Krishnendu Chatterjee \inst{1}
\and Laurent Doyen  \inst{2} }
\institute{IST Austria
\and LSV, ENS Cachan \& CNRS, France 
}
\begin{document}
\maketitle
\pagestyle{plain}

\begin{abstract}
We consider multi-player graph games with partial-observation and parity objective.
While the decision problem for three-player games with a coalition of 
the first and second players against the third player is undecidable in general,    %%% L: added "in general"
we present 
a decidability result for partial-observation games where the first and third player
are in a coalition against the second player, thus where the second player
is adversarial but weaker due to partial-observation. 
We establish tight complexity bounds in the case where player~$1$ is 
less informed than player~$2$, namely 2-EXPTIME-completeness for parity objectives.
The symmetric case of player~$1$ more informed than player~$2$ is much more
complicated, and we show that already in the case where player~$1$ has 
perfect observation, memory of size non-elementary is necessary in general for 
reachability objectives, and the problem is decidable for safety and reachability objectives. 
Our results
have tight connections with partial-observation stochastic games for which
we derive new complexity results.
\end{abstract}

\section{Introduction}

\noindent{\bf Games on graphs.}
Games played on graphs are central in several important problems in computer 
science, such as reactive synthesis~\cite{PR89,RamadgeWonham87}, 
verification of open systems~\cite{AHK02}, and many others.
The game is played by several players on a finite-state graph, 
with a set of angelic (existential) players and a set of demonic (universal) 
players as follows:
the game starts at an initial state, and given the current state, the 
successor state is determined by the choice of moves of the players.
The outcome of the game is a \emph{play}, which is an infinite sequence of 
states in the graph. 
A \emph{strategy} is a transducer to resolve choices in a game for a player 
that given a finite prefix of the play specifies the next move.
Given an objective (the desired set of behaviors or plays), the goal of the 
existential players is to ensure the play belongs to the objective 
irrespective of the strategies of the universal players.
In verification and control of reactive systems an objective is 
typically an $\omega$-regular set of paths. 
The class of $\omega$-regular languages, that extends classical regular languages to 
infinite strings, provides a robust specification language to express
all commonly used specifications, and parity objectives are a canonical way to 
define such $\omega$-regular specifications~\cite{Thomas97}.
%%In a parity objective, every state of the game is mapped to a non-negative 
%%integer priority and the goal is to ensure that the minimum priority visited 
%%infinitely often is even.
Thus games on graphs with parity objectives provide a general framework for 
analysis of reactive systems.

\smallskip\noindent{\bf Perfect vs partial observation.}
Many results about games on graphs make the hypothesis of {\em perfect 
observation} (i.e., players have perfect or complete observation about the 
state of the game).  
In this setting, due to determinacy (or switching of the strategy quantifiers 
for existential and universal players)~\cite{Martin75}, 
the questions expressed by an arbitrary alternation of quantifiers 
reduce to a single alternation, and thus are equivalent to solving
two-player games (all the existential players against all the universal players).
However, the assumption of perfect observation is often not realistic in practice.
For example in the control of physical systems, digital sensors with finite precision
provide partial information to the controller about the system state~\cite{DDR06,HK99}.
Similarly, in a concurrent system the modules expose partial interfaces
and have access to the public variables of the other processes, but not 
to their private variables~\cite{Reif84,AHK02}.
Such situations are better modeled in the more general framework
of \emph{partial-observation} games~\cite{Reif79,Reif84,RP80}.
%%and have been studied in the context of verification and synthesis.

\smallskip\noindent{\bf Partial-observation games.}
Since partial-observation games are not determined, unlike the perfect-observation 
setting, the multi-player games problems do not reduce to the case of two-player games. 
Typically, multi-player partial-observation games are studied in the following 
setting: a set of partial-observation existential players, against a perfect-observation 
universal player, such as for distributed synthesis~\cite{PR89,FS10,RS10}.
The problem of deciding if the existential players can ensure a reachability 
(or a safety) objective is undecidable in general, even for two existential 
players~\cite{RP79,PR89}.
However, if the information of the existential players form a chain 
(i.e., existential player~1 more informed than existential player~2, 
existential player~2 more informed than existential player~3, and so on),
then the problem is decidable~\cite{PR89,MT01,MW03}.

\smallskip\noindent{\bf Games with a weak adversary.} 
One aspect of multi-player games that has been largely ignored is the 
presence of weaker universal players that do not have perfect observation.
However, it is natural in the analysis of composite reactive systems that some 
universal players represent components that do not have access to all variables
of the system.
In this work we consider games where adversarial players can have partial 
observation.
If there are two existential (resp., two universal) players with incomparable 
partial observation, then the undecidability results follows from~\cite{RP79,PR89};
and if the information of the existential (resp., universal) players form 
a chain, then they can be reduced to one partial-observation existential 
(resp., universal) player.
We consider the following case of partial-observation games: one 
partial-observation existential player (player~1), 
one partial-observation universal player (player~2), 
one perfect-observation existential player (player~3), and 
one perfect-observation universal player (player~4).   
Roughly, having more partial-observation players in general leads to undecidability, 
and having more perfect-observation players reduces to two perfect-observation players.
We first present our results and then discuss two applications of our 
model.

\smallskip\noindent{\bf Results.} Our main results are as follows:
\begin{compactenum}
\item \emph{Player~1 less informed.} We first consider the case when player~1
is less informed than player~2. 
We establish the following results: $(i)$~a 2-EXPTIME upper bound 
for parity objectives and a 2-EXPTIME lower bound for reachability objectives
(i.e., we establish 2-EXPTIME-completeness); 
$(ii)$~an EXPSPACE upper bound for parity objectives when player~1 is blind 
(has only one observation), and EXPSPACE lower bound for reachability 
objectives even when both player~1 and player~2 are blind.
In all these cases, if the objective can be ensured then the upper bound on 
memory requirement of winning strategies is at most doubly exponential.

\item \emph{Player~1 is more informed.} We consider the case when player~1
can be more informed as compared to player~2, and show that even when 
player~1 has perfect observation there is a non-elementary lower bound on the 
memory required by winning strategies.  
This result is also in sharp contrast to distributed games, where if only one 
player has partial observation then the upper bound on memory of winning 
strategies is %%at most 
exponential.
\end{compactenum}

\smallskip\noindent{\bf Applications.}
We discuss two applications of our results: the sequential 
synthesis problem, and new complexity results for partial-observation 
\emph{stochastic} games.

\begin{compactenum}
\item The sequential synthesis problem consists of a set of partially implemented modules,
where first a set of modules needs to be refined, followed by a refinement of some modules
by an external source, and then the remaining modules are refined so that the 
composite open reactive system satisfies a specification. 
Given the first two refinements cannot access all private variables, we have a 
four-player game where the first refinement corresponds to player~1, the 
second refinement to player~2, the third refinement to player~3, and 
player~4 is the environment.

\item In partial-observation stochastic games, there are two 
partial-observation players (one existential and one universal) playing in 
the presence of uncertainty in the transition function 
(i.e., stochastic transition function). 
The qualitative analysis question is to decide the 
existence of a strategy for the existential player to ensure the parity objective 
with probability~1 (or with positive probability) against all strategies of the 
universal player. 
The witness strategy can be randomized or deterministic (pure).
While the qualitative problem is undecidable, the practically relevant restriction to 
finite-memory pure strategies reduces to the four-player game problem.
Moreover, for finite-memory strategies, the decision problem for randomized strategies 
reduces to the pure strategy question~\cite{CD12}. 
By the results we establish in this paper, new decidability and complexity results are 
obtained for the qualitative analysis of partial-observation stochastic games
with player~$2$ partially informed but more informed than player~$1$.
The complexity results for almost-sure winning are summarized in Table~\ref{tab:complexity}.
Surprisingly for reachability objectives, whether 
player~2 is perfectly informed or more informed than player~1 does not change 
the complexity for randomized strategies, but it results in an exponential 
increase in the complexity for pure strategies.
%{\bf KRISH: MAYBE SOME MORE EXPLANATION.}
%\mynote{L: to be updated. KRISH: Updated.}

\end{compactenum}

\begin{table*}[!t]
\begin{center}
\scalebox{0.75}{
%\begin{scriptsize}
\begin{tabular}{|l|c|c|c|c|c|c|}
\cline{2-7}
\multicolumn{1}{l}{}              & \multicolumn{2}{|c|}{Reachability}          & \multicolumn{2}{|c|}{Parity}          & \multicolumn{2}{|c|}{Parity} \\
\cline{1-1}
\multirow{2}{*}{\backslashbox{Player~$1$}{Player~$2$}}               & \multicolumn{2}{|c|}{Finite- or infinite-memory strategies} & \multicolumn{2}{|c|}{Infinite-memory strategies} & \multicolumn{2}{|c|}{Finite-memory strategies}   \\
\cline{2-7}
\multicolumn{1}{|l|}{{\small \strut}} & Perfect     & More informed                  &  Perfect     & More informed              & Perfect     & More informed     \\
\hline
Randomized {\small \strut}           & EXP-c~\cite{CDHR07}   & EXP-c~\cite{BGG09}   &  Undec.~\cite{BBG08,CDGH10} &  Undec.~\cite{BBG08,CDGH10}  & EXP-c~\cite{CDNV14}   & {\bf 2EXP}     \\
\hline
Pure {\small \strut}           & EXP-c~\cite{CD12}   & {\bf 2EXP-c}   &  Undec.~\cite{BBG08} &  Undec.~\cite{BBG08} & EXP-c~\cite{CDNV14}   & {\bf 2EXP-c}     \\

\hline
\end{tabular}
%\end{scriptsize}
}
\end{center}

\caption{Complexity of qualitative analysis (almost-sure winning) for partial-observation stochastic games with partial observation for player~1 with reachability and parity objectives.
Player~2 has either perfect observation or more information than player~1(new results boldfaced).
For positive winning, all entries other than the first (randomized strategies
for player~~1 and perfect observation for player~2) remain the same, and the
complexity for the first entry for positive winning is PTIME-complete.
}\label{tab:complexity}

\end{table*}

\noindent{\bf Organization of the paper.}
In Section~\ref{sec:definitions} we present the definitions of 
three-player games, and other related models (such as partial-observation 
stochastic games). 
In Section~\ref{sec:player-one-less} we establish the results for 
three-player games with player~1 less informed, 
and in Section~\ref{sec:player-one-perfect} we show hardness of 
three-player games with perfect observation for player~1 (which is a special
case of player~1 more informed).
Finally, in Section~\ref{sec:more-than-three} we show how our upper bounds
for three-player games from Section~\ref{sec:player-one-less} extend to 
four-player games, and we discuss multi-player games.
We conclude with the applications in Section~\ref{sec:applications}.

\section{Definitions}\label{sec:definitions}

We first consider three-player games with parity objectives and we establish
new complexity results in Section~\ref{sec:player-one-less} that we later 
extend to four-player games in Section~\ref{sec:more-than-three}.
In this section, we also present the related models of alternating tree 
automata that provide useful technical results, and two-player
stochastic games for which our contribution implies new complexity results. 

\subsection{Three-player games}\label{sec:three-players}

%Define $\last(\cdot)$.

\paragraph{Games}
Given alphabets $A_i$ of actions for player~$i$ ($i=1,2,3$), a \emph{three-player game} 
is a tuple $G = \tuple{Q, q_0, \delta}$ where:
\begin{compactitem}
\item $Q$ is a finite set of states with
%\item $A_i$ is the finite alphabet of player~$i$ ($i=1,2,3$);
%%\item 
$q_0 \in Q$ the initial state; and
\item $\delta: Q \times A_1 \times A_2 \times A_3 \to Q$ is a deterministic 
transition function that, given a current state $q$, and actions $a_1 \in A_1$, $a_2 \in A_2$, $a_3 \in A_3$
of the players, gives the successor state $q' = \delta(q,a_1,a_2,a_3)$.
%\item
\end{compactitem}
The games we consider are sometimes called \emph{concurrent} because all three
players need to choose simultaneously an action to determine a successor 
state. The special class of \emph{turn-based} games corresponds to the case 
where in every state, one player has the turn and his sole action determines
the successor state. In our framework, a turn-based state for player~$1$
is a state $q \in Q$ such that $\delta(q,a_1,a_2,a_3) = \delta(q,a_1,a'_2,a'_3)$
for all $a_1 \in A_1$, $a_2,a'_2 \in A_2$, and $a_3,a'_3 \in A_3$. We define 
analogously turn-based states for player~$2$ and player~$3$.
A game is turn-based if every state of $G$ is turn-based (for some player).
The class of two-player games is obtained when $A_3$ is a singleton.
In a game~$G$, given $s \subseteq Q$, $a_1 \in A_1$, $a_2 \in A_2$, let
$\post^G(s,a_1,a_2,-) = \{q' \in Q \mid \exists q \in s \cdot \exists a_3 \in A_3: q' = \delta(q,a_1,a_2,a_3)\}$.

\paragraph{Observations}
For $i=1,2,3$, a set $\Obs_i \subseteq 2^Q$ of \emph{observations} (for player~$i$) 
is a partition of $Q$ (i.e., $\Obs_i$ is a set of non-empty and non-overlapping 
subsets of $Q$, and their union covers $Q$). Let $\obs_i: Q \to \Obs_i$ be the
function that assigns to each state $q \in Q$ the (unique) observation for player~$i$
that contains $q$, i.e. such that $q \in \obs_i(q)$.
The functions $\obs_i$ are extended to sequences $\rho = q_0 \dots q_n$ of states 
in the natural way, namely $\obs_i(\rho) = \obs_i(q_0) \dots \obs_i(q_n)$.
We say that player~$i$ is \emph{blind} if $\Obs_i = \{Q\}$, that is player~$i$ has only 
one observation; player~$i$ has \emph{perfect information} if $\Obs_i = \{\{q\} \mid q \in Q\}$, 
that is player~$i$ can distinguish each state; and
player~$1$ is \emph{less informed} than player~$2$ (we also say player~2 is more informed)
if for all $o_2 \in \Obs_2$, there exists $o_1 \in \Obs_1$ such that $o_2 \subseteq o_1$.

\paragraph{Strategies}
For $i=1,2,3$, let $\Straa_i$ be the set of \emph{strategies} $\straa_i: \Obs_i^+ \to A_i$ 
of player~$i$ that, given a sequence of past observations, give an action for player~$i$. 
Equivalently, we sometimes view a strategy of player~$i$ as a function 
$\straa_i: Q^+ \to A_i$ satisfying $\straa_i(\rho) = \straa_i(\rho')$ for all 
$\rho,\rho' \in Q^+$ such that $\obs_i(\rho) = \obs_i(\rho')$,
and say that $\straa_i$ is \emph{observation-based}.

%A strategy of player~$j$ ($j=2,3$) is a function $\straa_j: Q^+ \to A_j$
%without any restriction. We denote by $\Straa_j$ the set of strategies of 
%player~$j$.

\paragraph{Outcome}
Given strategies $\straa_i \in \Straa_i$ ($i=1,2,3$) in $G$, 
the \emph{outcome play} from a state $q_0$ is the infinite sequence $\rho^{\straa_1,\straa_2,\straa_3}_{q_0} = q_0 q_1 \dots$
such that for all $j \geq 0$, we have 
$q_{j+1} = \delta(q_j,a^j_1,a^j_2,a^j_3)$ where $a^j_i = \straa_i(q_0 \dots q_j)$ (for $i=1,2,3$).

\paragraph{Objectives}
An \emph{objective} is a set $\alpha \subseteq Q^\omega$ of infinite sequences
of states. 
A play $\rho$ \emph{satisfies} the objective $\alpha$ if $\rho \in \alpha$. 
An objective $\alpha$ is \emph{visible} for player $i$ if
for all $\rho, \rho' \in Q^\omega$, if $\rho \in \alpha$ and 
$\obs_i(\rho) = \obs_i(\rho')$, then $\rho' \in \alpha$. 
We consider the following objectives:
\begin{compactitem}

\item \emph{Reachability}.
Given a set $\target \subseteq Q$ of target states, the \emph{reachability} objective 
$\Reach(\target)$ requires that a state in $\target$ be visited at least once, 
that is, $\Reach(\target)= \{\rho = q_0 q_1 \dots  \mid \exists k \geq 0: q_k \in \target \}$. 
% Dually, the \emph{safety} objective $\Safe(\target)$ requires that only states 
% in $\target$ be visited.
% Formally,
% $\Safe(\target)=\{ \rho = q_0 q_1 \dots \mid \forall k \geq 0 \cdot q_k \in \target\}$.

\item \emph{Safety}.
Given a set $\target \subseteq Q$ of target states, the \emph{safety} objective 
$\Safe(\target)$ requires that only states in $\target$ be visited, 
that is, $\Safe(\target)= \{\rho = q_0 q_1 \dots  \mid \forall k \geq 0: q_k \in \target \}$. 

\item \emph{Parity}.
For a play $\rho = q_0 q_1 \dots$ we denote by $\Inf(\rho)$ the set of states 
that occur infinitely often in $\rho$, that is, $\Inf(\rho) = \{q \in Q \mid \forall k \geq 0 
\cdot \exists n \geq k: q_n = q\}$.
For $d \in \nat$, let $p: Q \to \{0,1,\dots,d\}$ be a priority
function, which maps each state to a nonnegative integer priority. The parity objective
$\Parity(p)$ requires that the minimum priority occurring infinitely often be even. Formally,
$\Parity(p) = \{\rho \mid \min \{p(q) \mid q \in \Inf(\rho)\} \text{ is even} \}$. 
Parity objectives are a canonical way to express $\omega$-regular objectives~\cite{Thomas97}.
If the priority function is constant over observations of player~$i$, 
that is for all observations $\gamma \in \Obs_i$ we have $p(q) = p(q')$ for all $q, q' \in \gamma$,
then the parity objective $\Parity(p)$ is visible for player~$i$.
\end{compactitem}

\paragraph{Decision problem}
Given a game $G = \tuple{Q, q_0, \delta}$ and an objective $\alpha \subseteq Q^{\omega}$,
the \emph{three-player decision problem} is to 
decide if $\exists \straa_1 \in \Straa_1 \cdot \forall \straa_2 \in \Straa_2 \cdot
\exists \straa_3 \in \Straa_3: \rho^{\straa_1,\straa_2,\straa_3}_{q_0} \in \alpha$.

\subsection{Related models}

The results for the three-player decision problem have tight connections 
and implications for decision problems on alternating tree automata and
partial-observation stochastic games that we formally define below.

\paragraph{Trees}
%We follow some definitions and notation of~\cite{KVW00}. 
%Given a finite
%sequence $w = s_0 \dots s_n \in \Alphabet^+$ over a finite set $\Alphabet$, 
%let $\last(w) = s_n$ be the last element of $w$.
%
An $\Alphabet$-labeled tree $(T,V)$ consists of a prefix-closed 
set $T \subseteq \nat^*$ (i.e., if $x\cdot d \in T$ with $x \in \nat^*$
and $d \in \nat$, then $x \in T$), and a mapping $V: T \to \Alphabet$ that assigns
to each node of $T$ a letter in $\Alphabet$.
Given $x \in \nat^*$ and $d \in \nat$ such that $x \cdot d \in T$, 
we call $x \cdot d$ the \emph{successor} in direction $d$ of $x$. 
%The \emph{successors} of a node $x \in T$ are the nodes $x \cdot d \in T$ such that $d \in \nat$.
%The degree $\deg(x)$ of a node $x \in T$ is the number of successors of $x$ in $T$. 
The node~$\varepsilon$ is the \emph{root} of the tree.
%The \emph{path-label} of a node $x = d_1 d_2 \dots d_n$ is the sequence
%of labels $V^*(x) = V(\varepsilon) V(d_1) V(s_1 d_2) \dots V(d_1 d_2 \dots d_n)$
%that labels the nodes on the path from the root to $x$ (hence $\last(V^*(x)) = V(x)$).
An \emph{infinite path} in $T$ is an infinite sequence $\pi = d_1 d_2 \dots$
of directions $d_i \in \nat$ such that every finite prefix of $\pi$ is a node in~$T$.
%A \emph{path} of $(T,V)$ is a prefix-closed set $\pi \subseteq T$ such that 
%for every $x \in \pi$, there exists a unique direction $d \in \nat$ 
%such that $x \cdot d \in \pi$.

\paragraph{Alternating tree automata}
Given a parameter $k \in \nat \setminus \{0\}$, we consider input trees of rank 
$k$, i.e. trees in which every node has at most $k$ successors. Let $[k] = \{0,\dots,k-1\}$,
and given a finite set $U$, let $\B^+(U)$ be the set of positive Boolean formulas 
over $U$, that is formulas built from elements in $U \cup \{\true,\false\}$ using 
the Boolean connectives $\land$ and $\lor$.
%We present a definition of alternating tree automata (see e.g.~\cite{MS87,KVW00}) 
%with the syntactic restriction that the states are associated to a fixed direction 
%in the input tree. The restriction is for the sake of simplifying the presentation, 
%and does not reduce the expressiveness of the class of automata (i.e., they recognize
%the regular languages of infinite trees with fixed finite rank).
%
%\mynote{L: notations: $\Alphabet$ for alphabet vs. set of strategies.}
%\mynote{L: notations: $A$ for automaton vs. set of actions.}
An \emph{alternating tree automaton} over alphabet $\Alphabet$ is a tuple 
$\A = \tuple{S,s_0,\delta_{\A}}$ where:
\begin{compactitem}
\item $S$ is a finite set of states with 
%\item $A_i$ is the finite alphabet of player~$i$ ($i=1,2,3$);
%\item 
$s_0 \in S$ the initial state; and 
\item $\delta_{\A}: S \times \Alphabet \to \B^+(S \times [k])$ is a transition function.
%\item $\dir: S \to \{0,\dots,k-1\}$ associates a fixed direction to each state.
\end{compactitem}
Intuitively, the automaton is executed from the initial state $s_0$ and reads
the input tree in a top-down fashion starting from the root~$\varepsilon$. In state $s$,
if $a \in \Alphabet$ is the letter that labels the current node $x$ of the input tree,
the behavior of the automaton is given by the formulas $\varphi = \delta_{\A}(s,a)$. 
The automaton chooses a \emph{satisfying assignment} of $\varphi$,
i.e. a set $Z \subseteq S \times [k]$ such that the formula $\varphi$ is satisfied when
the elements of $Z$ are replaced by $\true$, and the elements of $(S \times [k]) \setminus Z$
are replaced by $\false$. Then, for each $\tuple{s_1,d_1} \in Z$ a copy of the automaton is spawned 
in state $s_1$, and proceeds to the node $x \cdot d_1$ of the input tree. 
In particular, it requires that $x \cdot d_1$ belongs to the input tree. 
For example, if $\delta_{\A}(s,a) = (\tuple{s_1,0} \land \tuple{s_2,0}) \lor (\tuple{s_3,0} \land \tuple{s_4,1} \land \tuple{s_5,1})$,
then the automaton should either spawn two copies that process the successor
of~$x$ in direction~$0$~(i.e., the node $x \cdot 0$) and that enter the 
states~$s_1$ and~$s_2$ respectively, or spawn three copies of which one processes $x \cdot 0$
and enters state~$s_3$, and the other two process $x \cdot 1$ and enter 
the states~$s_4$ and~$s_5$ respectively. 

%In a standard definition of alternating tree automata~\cite{MS87,KVW00}, there is no fixed 
%direction associated to each state of the automaton. Rather the transition function can specify
%a direction to proceed along with each state to enter (the transition relation is
%then of the form $\delta_{\A}: S \times \Alphabet \to \B^+(S \times \{0,\dots,k-1\})$. And it is possible
%to specify several directions along with the same state, for instance $(s_1,0) \land (s_1,1)$
%requires that the automaton spawn two copies in state $s_1$, one that proceeds
%direction $0$ in the input tree, and one that proceeds direction $1$.
%Hence our definition can be viewed as a syntactic restriction of the standard
%definition. However, the two definitions are equally powerful as alternating tree 
%automata of the standard definition can be encoded in our definition as follows.
%For each state $s$, construct $k$ copies $(s,0), (s,1), \dots, (s,k-1)$ of $s$
%(i.e., the transition relation in each copy is the same as in $s$),
%and assign direction $\dir(s,d) = d$ for each $0 \leq d < k$.

\paragraph{Language and emptiness problem}
A run of $\A$ over a $\Alphabet$-labeled input tree $(T,V)$ is a tree $(T_r,r)$
labeled by elements of $T \times S$, where a node of $T_r$ labeled by $(x,s)$
corresponds to a copy of the automaton processing the node~$x$ of the 
input tree in state $s$. Formally, a \emph{run} of $\A$ over an input tree $(T,V)$ is a 
$(T \times S)$-labeled tree $(T_r,r)$ such that $r(\varepsilon) = (\varepsilon,s_0)$
and for all $y \in T_r$, if $r(y) = (x,s)$,
then the set $\{\tuple{s',d'} \mid \exists d \in \nat: r(y \cdot d) = (x \cdot d', s')\}$ 
is a satisfying assignment for $\delta_{\A}(s,V(x))$.
Hence we require that, given a node $y$ in $T_r$ labeled by $(x,s)$, there is a
satisfying assignment $Z \subseteq S \times [k]$ for the formula $\delta_{\A}(s,a)$ where
$a = V(x)$ is the letter labeling the current node $x$ of the input tree, 
and for all states $\tuple{s',d'} \in Z$ there is a (successor) node $y \cdot d$ in 
$T_r$ labeled by $(x \cdot d', s')$.

Given an accepting condition $\varphi \subseteq S^{\omega}$, we say that a run $(T_r,r)$
is \emph{accepting} if for all infinite paths $d_1 d_2 \dots$ of $T_r$, 
the sequence $s_1 s_2 \dots$ such that $r(d_i) = (\cdot,s_i)$ for all $i\geq 0$
is in $\varphi$. %\mynote{L: add "from the root" if we consider non-tail objectives}
The \emph{language} of $\A$ is the set $L_k(\A)$ of all input trees of rank $k$ 
over which there exists an accepting run of $\A$. The emptiness problem for alternating
tree automata is to decide, given $\A$ and parameter $k$, whether $L_k(\A) = \emptyset$.
For details related to alternating tree automata and the emptiness problem 
see~\cite{EJ91,MS87}.

%%%%%  -------------------------------------------------------

\paragraph{Two-player partial-observation stochastic games}
Given alphabet $A_i$ of actions,
and set $\Obs_i$ of observations (for player~$i \in \{1,2\}$),
a \emph{two-player partial-observation stochastic game}
(for brevity, two-player stochastic game)
is a tuple $G=\tuple{Q, q_0, \delta}$ 
where $Q$ is a finite set of states, 
$q_0 \in Q$ is the initial state, and
$\delta: Q \times A_1 \times A_2 \to \dist(Q)$ is a 
probabilistic transition where $\dist(Q)$ is the set of 
probability distributions $\kappa: Q \to [0,1]$ on $Q$,
such that $\sum_{q \in Q} \kappa(q) = 1$. 
Given a current state $q$ and
actions~$a,b$ for the players, the transition probability 
to a successor state~$q'$ is $\delta(q,a,b)(q')$.

Observation-based strategies are defined as for three-player games. 
An \emph{outcome play} from a state $q_0$ under strategies 
$\straa_1, \straa_2$ is an infinite sequence $\rho=q_0 \,a_0 b_0\, q_1 \ldots$ 
such that $a_i = \straa_1(q_0 \dots q_i)$, 
$b_i = \straa_2(q_0 \dots q_i)$, and $\delta(q_i, a_i, b_i)(q_{i+1}) > 0$
for all $i \geq 0$.

\paragraph{Qualitative analysis}
Given an objective $\alpha$ that is Borel measurable (all Borel sets in the 
Cantor topology and all objectives considered in this paper are measurable~\cite{Kechris}), a strategy $\straa_1$ for player~$1$  
is \emph{almost-sure winning} (resp., \emph{positive winning}) for the objective 
$\alpha$ from $q_0$ if for all observation-based strategies $\straa_2$ for player~$2$,
we have $\Prb_{q_0}^{\straa_1, \straa_2}(\alpha)=1$ (resp.,  $\Prb_{q_0}^{\straa_1, \straa_2}(\alpha)>0$)
where $\Prb_{q_0}^{\straa_1, \straa_2}(\cdot)$ is the unique probability measure
induced by the natural probability measure on finite prefixes of plays (i.e.,
the product of the transition probabilities in the prefix).

%\newpage

%\subsection{$\exists_{\blind} \cdot \forall_{\partial} \cdot  \exists_{\perfect}$}
%\section{$\exists_{\partial} \cdot \forall_{\partial} \cdot  \exists_{\perfect}$}

\section{Three-Player Games with Player 1 Less Informed}\label{sec:player-one-less}

We consider the three-player (non-stochastic) games defined in Section~\ref{sec:three-players}.
We show that for reachability and parity objectives the three-player decision problem 
is decidable when player~$1$ is less informed than player~$2$. 
%The complexity of this problem ranges from EXPTIME-complete when player~$2$ has
%perfect information, to EXPSPACE-complete when player~$1$ is blind, and
%2-EXPTIME-complete in general.
The problem is EXPSPACE-complete when player~$1$ is blind, and
2-EXPTIME-complete in general.

% Preliminary definition. In a game~$G$, given $s \subseteq Q$, $a \in A_1$, $b \in A_2$, let
% $\post^G(s,a,b,-) = \{q' \in Q \mid \exists q \in s \cdot \exists c \in A_3: q' = \delta(q,a,b,c)\}$.

\begin{remark}\label{rem:final-player}
Observe that for three-player (non-stochastic) games, once the strategies of 
the first two players are fixed we obtain a graph, and in graphs perfect-information 
coincides with blind for construction of a path (see~\cite[Lemma~2]{CD10b} that counting 
strategies that count the number of steps are sufficient which can be ensured by a 
player with no information).
Hence without loss of generality we consider that player~3 has 
perfect observation, and drop the observation for player~3.
\end{remark}

Our results for upper bounds are obtained by a reduction of the three-player game problem
to an equivalent exponential-size partial-observation game with only two players,
which is known to be solvable in EXPTIME for parity objectives,
and in PSAPCE when player~$1$ is blind~\cite{CDHR07}.
Our reduction preserves the number of observations of player~$1$ (thus 
if player~$1$ is blind in the three-player game, then player~$1$ is
also blind in the constructed two-player game).
Hence, the 2-EXPTIME and EXPSPACE bounds follow from this reduction.
%,and moreover it turns out that when player~$2$ has
%perfect information the reduction is polynomial, 
%giving an EXPTIME bound in that case.

\begin{theorem}[Upper bounds]\label{theo:one-less-informed-upper-bound}
Given a three-player game $G = \tuple{Q, q_0, \delta}$ with player~$1$ less 
informed than player~$2$ and a parity objective $\alpha$, 
the problem of deciding whether $\exists \straa_1 \in \Straa_1 \cdot \forall \straa_2 \in \Straa_2 \cdot
\exists \straa_3 \in \Straa_3: \rho^{\straa_1,\straa_2,\straa_3}_{q_0} \in \alpha$
can be solved in 2-EXPTIME. If player~$1$ is blind, then the problem can be solved in EXPSPACE.
%%If player~$2$ has perfect information, then the problem can be solved in EXPTIME.
%%\mynote{Mention "three-player decision problem" ?} 
\end{theorem}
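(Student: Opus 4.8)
The plan is to reduce the three-player game $G$ (with player~1 less informed than player~2) to a two-player partial-observation game $G'$ in which player~1 plays against a single partial-observation adversary, preserving the parity objective and the number of observations of player~1, so that the EXPTIME algorithm of~\cite{CDHR07} (resp.\ PSPACE when player~1 is blind) applied to $G'$ yields the claimed 2-EXPTIME (resp.\ EXPSPACE) bound. The central idea is a \emph{subset construction that is ``private'' to player~1}: since player~1 is less informed than player~2, every observation of player~2 refines an observation of player~1, so from player~1's point of view the combined behaviour of player~2 and player~3 can be summarized by a set of states consistent with player~1's observation history, together with enough information for player~2 to still resolve her own (finer) knowledge. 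Concretely, I would let the state space of $G'$ consist of pairs $\tuple{s,q}$ where $s \subseteq Q$ is the set of states the play could be in given player~1's observations so far, and $q \in s$ is the actual current state; player~1 observes only the $\obs_1$-image of $s$ (equivalently, observes $s$ itself, which is determined by player~1's observation sequence), while player~2 in $G'$ is given the full pair, i.e.\ is perfectly informed about $G'$. Player~3 is folded into the adversary as well, using Remark~\ref{rem:final-player} to treat player~3 as perfectly informed.

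The key steps, in order, are: (1) define $G'$ formally — states $\tuple{s,q}$ with $q \in s$, initial state $\tuple{\{q_0\},q_0\}$ (or $\tuple{\obs_1(q_0),q_0}$), action set $A_1$ for player~1, and a combined action set for the adversary that bundles player~2's action $a_2$ and player~3's action $a_3$; the transition updates $q$ by $\delta$ and updates $s$ to the set of all states reachable under the \emph{same} $a_1$ and the \emph{same} player-2 observation class, i.e.\ $s' = \post^G(s,a_1,a_2,-)$ intersected appropriately so that $s'$ is exactly the knowledge player~1 would have; here the hypothesis that player~2 is more informed is what makes $s'$ a function of data available to the adversary. (2) Lift the parity priority function to $G'$ via the second component, $p'(\tuple{s,q}) = p(q)$, so the objective is visible-consistent with the reduction. (3) Prove the equivalence: $\exists \straa_1 \forall \straa_2 \exists \straa_3$ winning in $G$ iff player~1 has an observation-based winning strategy in $G'$ against the perfectly-informed adversary. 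The forward direction maps a winning $\straa_1$ in $G$ to a strategy in $G'$ that plays the same action on the corresponding observation history (well-defined because the $s$-component is determined by player~1's observations); the backward direction unfolds an adversary strategy in $G'$ into a choice of $\straa_2$ and, for each $\straa_2$, a response $\straa_3$, exploiting that the adversary in $G'$ sees everything player~2 and player~3 jointly see. (4) Observe $|G'| = |Q| \cdot 2^{|Q|}$, player~1 still has $|\Obs_1|$ observations, and player~2 in $G'$ has perfect observation; then invoke~\cite{CDHR07}: two-player partial-observation parity games are solvable in time exponential in the game size (hence $2^{2^{O(|Q|)}}$, i.e.\ 2-EXPTIME), and in PSPACE in the game size when player~1 is blind (hence EXPSPACE, since $|G'|$ is exponential).

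The main obstacle I expect is step~(3), and within it the \emph{correctness of the knowledge update}: one must verify that the $s$-component in $G'$ exactly captures ``the set of states consistent with player~1's observation history under \emph{some} choices of the other two players,'' and crucially that merging player~2 and player~3 into one adversary in $G'$ does not spuriously give the coalition more power — i.e.\ that the quantifier structure $\forall \straa_2 \exists \straa_3$ is faithfully simulated by a single adversary that is told player~1's action before committing (or not), and that the information-refinement hypothesis is used at exactly the point where $s'$ must be computable from $(s,a_1,a_2)$ without knowing $a_3$. A secondary subtlety is handling \emph{visibility} of the parity objective and making sure the reduction does not require player~1 to observe priorities he cannot see in $G$; this is resolved because the priority is read off the true-state component, which is what determines membership in $\Parity(p)$ along the unique outcome play, independently of what player~1 observes. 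Once the simulation lemma is in place, the complexity bounds are immediate from~\cite{CDHR07}.
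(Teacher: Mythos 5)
Your reduction does not decide the stated problem: by ``folding player~3 into the adversary'' and letting a single perfectly-informed opponent choose both $a_2$ and $a_3$, you are solving $\exists \straa_1 \cdot \forall \straa_2 \cdot \forall \straa_3$ rather than $\exists \straa_1 \cdot \forall \straa_2 \cdot \exists \straa_3$. Player~3 is existential, i.e., in coalition with player~1, and since partial-observation games are not determined you cannot move him across the quantifier; Remark~\ref{rem:final-player} only licenses assuming player~3 has perfect observation, not handing his moves to the opponent. Your $G'$ is therefore sound but incomplete: if, say, player~1 is blind and reaching the target requires player~3 to steer the play, player~1 wins the three-player game but loses $G'$. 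A second, related error is making the adversary in $G'$ perfectly informed: for the alternation $\forall\straa_2\exists\straa_3$ the restriction of player~2 to observation-based strategies genuinely helps the coalition (player~3 fixes his strategy \emph{after} player~2 and can exploit player~2's ignorance), so quantifying over all adversary strategies can change the answer --- the weakness of player~2 is exactly the hypothesis the theorem exploits. Finally, tracking player~1's knowledge set $s$ is the wrong bookkeeping: player~1's partial observation can simply be carried over to the reduced game unchanged; it is \emph{player~2's} knowledge that must be made explicit so that the reduced adversary can be taken perfectly informed.

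The paper's construction addresses precisely the point your proposal misses. The reduced game $H$ has as states the knowledge sets of player~2 (nonempty subsets of player-2 observations); player~2 in $H$ is perfectly informed, and player~1 in $H$ simulates the whole coalition: his actions are pairs $(a_1,f)$ where $f(s,a_2)\in\Obs_2$ announces, for each knowledge $s$ and each player-2 action $a_2$, only the \emph{observation} that player~2 would receive after player~3's hidden response, with transition $\post^G(s,a_1,a_2,-)\cap f(s,a_2)$. This lets the coalition retain the power of $\exists\straa_3$ while revealing to player~2 no more than it would observe in $G$; the correctness argument is then nontrivial (a DAG unravelling together with determinacy of an induced perfect-information game in one direction, and K\"onig's Lemma to extract $\straa_3$ in the other). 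The hypothesis that player~1 is less informed than player~2 is what makes $\Obs'_1$ a well-defined partition of the knowledge sets with $\abs{\Obs'_1}=\abs{\Obs_1}$, so the EXPTIME (resp.\ PSPACE for blind player~1) algorithms of~\cite{CDHR07} on the exponential-size $H$ give the 2-EXPTIME and EXPSPACE bounds. Your step~(4) is the right way to finish, but only once a reduction with the correct quantifier structure is in place.
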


\begin{proof}%[sketch]
The proof is by a reduction of the decision problem for three-player games 
to a decision problem for partial-observation two-player games with the 
same objective. 
We present the reduction for parity objectives that are visible for player~$2$ 
(defined by priority functions that are constant over observations of player~$2$).
The general case of not necessarily visible parity objectives can be solved
using a reduction to visible objectives, as in~\cite[Section 3]{CD10b}.

Given a three-player game $G = \tuple{Q, q_0, \delta}$ 
over alphabet of actions $A_i$ ($i=1,2,3$), and observations $\Obs_1, \Obs_2 \subseteq 2^Q$
for player~$1$ and player~$2$, with player~$1$ less informed than player~$2$,
we construct a two-player game $H = \tuple{Q_H, \{q_0\}, \delta_H}$ 
over alphabet of actions $A'_i$ ($i=1,2$), and observations $\Obs'_1 \subseteq 2^{Q_H}$
and perfect observation for player~2, where (intuitive explanations follow):
\begin{compactitem}
\item $Q_H = \{s \in 2^Q \mid s \neq \emptyset \land \exists o_2 \in \Obs_2: s \subseteq o_2\}$;
\item $A'_1 = A_1 \times (2^Q \times A_2 \to \Obs_2)$, and $A'_2 = A_2$;
\item $\Obs'_1 = \big\{ \{s \in Q_H \mid s \subseteq o_1\} \mid o_1 \in \Obs_1 \big\}$,
and let $\obs'_1: Q_H \to \Obs'_1$ be the corresponding observation function;
\item $\delta_H(s, (a_1,f), a_2) = \post^G(s,a_1,a_2,-) \cap f(s,a_2)$.
\end{compactitem}

Intuitively, the state space $Q_H$ is the set of knowledges of player~$2$ about the 
current state in $G$,
i.e., the sets of states compatible with an observation of player~$2$. 
Along a play in $H$, the knowledge of player~$2$ is updated to represent the set of 
possible current states in which the game~$G$ can be. In~$H$ player~$2$ has 
perfect observation and the role of player~$1$ in the game $H$ is to simulate 
the actions of both player~$1$ and player~$3$ in $G$. 
Since player~$2$ fixes his strategy before 
player~$3$ in $G$, the simulation should not let player~$2$ know
\text{player-$3$'s} action, but only the observation that player~$2$ 
will actually see while playing the game.
The actions of player~$1$ in $H$ are pairs $(a_1,f) \in A'_1$ where $a_1$
is a simple action of player~$1$ in $G$, and $f$ gives the observation $f(s,a_2)$
received by player~$2$ after the response of player~$3$ to the action $a_2$
of player~$2$ when the knowledge of player~$2$ is~$s$.
In~$H$, player~$1$ has partial observation, as he cannot distinguish
knowledges of player~$2$ that belong to the same observation of player~$1$ in $G$.
The transition relation updates the knowledges of player~$2$ as expected.
Note that $\abs{\Obs_1} = \abs{\Obs'_1}$, and therefore if player~$1$ is blind
in $G$ then he is blind in~$H$ as well.

Given a visible parity objective $\alpha = \Parity(p)$ where $p: Q \to \{0,1,\dots,d\}$ is 
constant over observations of player~$2$, let $\alpha' = \Parity(p')$
where $p'(s) = p(q)$ for all $q \in s$ and $s \in Q_H$. Note that the function
$p'$ is well defined since $s$ is a subset of an observation of player~$2$
and thus $p(q) = p(q')$ for all $q,q' \in s$. However, the parity 
objective $\alpha' = \Parity(p')$ may not be visible to player~$1$ in $G$.
We show that given a witness strategy in $G$ we can construct a witness strategy
in $H$ and vice-versa.
%, and the details of the strategy constructions are presented in Section~\ref{app:proof} of the appendix.
%We present the details of the strategy constructions to establish
%the correctness of the construction for the proof of Theorem~\ref{theo:one-less-informed-upper-bound}.
Let $\Straa_i$ be the set of observation-based
strategies of player~$i$ ($i=1,2,3$) in $G$, and let $\Straa'_i$ be the set of observation-based
strategies of player~$i$ ($i=1,2$) in $H$.
We claim that the following statements are equivalent:
\begin{itemize}
\item[$(1)$] In $G$, $\exists \straa_1 \in \Straa_1 \cdot \forall \straa_2 \in \Straa_2 \cdot
\exists \straa_3 \in \Straa_3: \rho^{\straa_1,\straa_2,\straa_3}_{q_0} \in \alpha$.
\item[$(2)$] In $H$, $\exists \straa'_1 \in \Straa'_1 \cdot \forall \straa'_2 \in \Straa'_2: \rho^{\straa'_1,\straa'_2}_{\{q_0\}} \in \alpha'$.
\end{itemize}

The 2-EXPTIME result of the theorem follows from this equivalence because the game $H$
is at most exponentially larger than the game $G$, and 
two-player partial-observation games with a parity objective can be 
solved in EXPTIME, and when player~$1$ is blind they can be solved in PSPACE~\cite{CDHR07}.
Observe that when player~$2$ has perfect information, his observations are singletons
and $H$ is no bigger than $G$, and an EXPTIME bound follows in that case.

To show that $(1)$ implies $(2)$, let $\straa_1: \Obs_1^+ \to A_1$ be a 
strategy for player~$1$ such that for all strategies $\straa_2: \Obs_2^+ \to A_2$,
there is a strategy $\straa_3: \Obs_3^+ \to A_3$ such that 
$\rho^{\straa_1,\straa_2,\straa_3}_{q_0} \in \alpha$. 
%If we identify an observation $o_1' \in \Obs_1'$ with the
%observation $o_1 \in \Obs_1$ such that $u \subseteq o_1$ for all $u \in o'_1$, then
%we can view $\straa_1$ as a function $\straa_1: \Obs'^+_1 \to A_1$.
%
From $\straa_1$, we construct an (infinite) DAG over state space $Q_H \times \Obs_1^+$ with edges labeled
by elements of $A_2 \times \Obs_2$ defined as follows. The root is 
$(\{q_0\},\obs_1(q_0))$. There is an edge labeled by $(b,o_2) \in A_2 \times \Obs_2$ 
from $(s,\rho)$ to $(s',\rho')$ if $s' = \post^G(s,a,b,-) \cap o_2 \neq \emptyset$
where $a = \straa_1(\rho)$, and $\rho' = \rho \cdot o_1$ where $o_1 \in \Obs_1$
is the (unique) observation of player~$1$ such that $o_2 \subseteq o_1$.
Note that for every node $n = (s,\rho)$ in the DAG, for all states $q \in s$, 
for all $b \in A_2, c \in A_3$, there is a successor $n' = (s',\rho')$ of 
$n$ such that $\delta(q,a,b,c) \in s'$ where $a = \straa_1(\rho)$. 
Consider a perfect-information turn-based game played over this DAG, 
between player~$2$ choosing actions $b \in A_2$
and player~$3$ choosing observations $o_2 \in \Obs_2$, resulting in an
infinite path $(s_0, \rho_0) (s_1,\rho_1) \dots $ in the DAG as expected, and
that is defined to be winning for player~$3$ if the sequence $s_0 s_1 \dots$
satisfies $\alpha'$. We show that in this game, for all strategies
of player~$2$ (which naturally define functions $\straa_2: \Obs_2^+ \to A_2$), 
there exists a strategy of player~$3$ (a function $f_3: Q_H \times \Obs_1^+ \times A_2 \to \Obs_2^+$)
to ensure that 
the resulting play satisfies $\alpha'$. The argument is
based on $(1)$ saying that given the strategy $\straa_1$ is fixed, 
for all strategies $\straa_2: \Obs_2^+ \to A_2$,
there is a strategy $\straa_3: \Obs_3^+ \to A_3$ such that 
$\rho^{\straa_1,\straa_2,\straa_3}_{q_0} \in \alpha$. 
Given a strategy for player~$2$ in the game over the DAG, we use
$\straa_3$ to choose observations $o_2 \in \Obs_2$ as follows.
We define a labelling function $\lambda: Q_H \times \Obs_1^+ \to Q$ over the DAG 
in a top-down fashion such that $\lambda(s,\rho) \in s$. First, let 
$\lambda(\{q_0\}, \obs_1(q_0)) = q_0$, and given 
$\lambda(s, \rho) = q$ with an edge labeled by $(b,o_2)$ to $(s',\rho')$,
let $\lambda(s', \rho') = \delta(q,a,b,c)$
where $a = \straa_1(\rho)$ and $c =  \straa_3(\rho)$.
Note that indeed $\delta(q,a,b,c) \in s'$.
Now we define a strategy for player~$3$ that,
in a node $(s, \rho)$ of the DAG, chooses the observation
$\obs_2(\delta(q,a,b,c))$ where $q = \lambda(s, \rho)$,
$a = \straa_1(\rho)$, $b$ is the action chosen by player~$2$
at that node (remember we fixed a strategy for player~$2$), 
and $c =  \straa_3(\rho)$. Since $\lambda(s,\rho) \in s$,
it follows that the resulting play satisfies $\alpha'$
since $\rho^{\straa_1,\straa_2,\straa_3}_{q_0}$ satisfies $\alpha$.

By determinacy of perfect-information turn-based games~\cite{Martin75},
in the game over the DAG there exists a strategy $f_3$ for player~$3$ such that
for all player-$2$ strategies, the outcome play satisfies $\alpha'$.
Using $f_3$, we construct a strategy $\straa'_1$ for player~$1$ in $H$ as follows. 
First, by a slight abuse of notation, we identify the observations $o_1' \in \Obs_1'$ 
with the observation $o_1 \in \Obs_1$ such that $u \subseteq o_1$ for all $u \in o'_1$.
For all $\rho \in \Obs_1^+$, let $\straa'_1(\rho) = (a,f)$ where $a = \straa_1(\rho)$ 
and $f$ is defined by $f(s,a_2) = f_3(s,\rho,a_2)$. 
By construction of the DAG and of the strategy $\straa'_1$, for all strategies
of player~$2$ in $H$ the outcome play $s_0 s_1 \dots$ satisfies the parity
objective $\alpha'$, and thus $\straa'_1$ is a winning 
observation-based strategy in $H$.

To show that $(2)$ implies $(1)$, let $\straa'_1$ be a winning 
observation-based strategy for the objective $\alpha'$ in $H$.
Consider the DAG over state space $Q_H \times \Obs_1^+$ with edges labeled
by elements of $A_2$ defined as follows. 
The root is $(\{q_0\},\obs_1(q_0))$. 
For all nodes $(s,\rho)$, for all $b \in A_2$,
there is an edge labeled by $b$ from $(s,\rho)$ to $(s',\rho')$ 
if $s' = \post^G(s,a,b,-) \cap o_2$ and $\rho' = \rho \cdot o_1$
where $o_2 = f(s,b)$ and $(a,f) = \straa'_1(\rho)$,
and $o_1 \in \Obs_1$ is the (unique) observation of player~$1$ such that 
$o_2 \subseteq o_1$.
We say that $(s',\rho')$ is the $b$-successor of $(s,\rho)$. 
Note that for all $q' \in s'$, there exists $q \in s$ and $c \in A_3$
such that $q' = \delta(q,a,b,c)$.

This DAG mimics the unraveling of $H$ under $\straa'_1$,
and since $\straa'_1$ is a winning strategy, for all infinite 
paths $(s_0, \rho_0) (s_1,\rho_1) \dots $ of the DAG, the
sequence $s_0 s_1 \dots$ satisfies $\alpha'$. 

Define the strategy $\straa_1$ such that $\straa_1(\rho) = a$ if $\straa'_1(\rho) = (a,f)$
(again identifying the observations in $\Obs_1'$ and $\Obs_1$). To show that 
$(1)$ holds, fix an arbitrary observation-based strategy $\straa_2$ for player~$2$.
The outcome play of $\straa_1$ and $\straa_2$ in $H$ is the sequence 
$(s_0,\rho_0) (s_1,\rho_1) \dots $ where $(s_0,\rho_0)$
is the root, and such that for all $i \geq 1$,
the node $(s_i,\rho_i)$ is the $b$-successor of $(s_{i-1},\rho_{i-1})$ 
where $b = \straa_2(\obs_2(s_0 s_1 \dots s_{i-1}))$ (where $\obs_2(s_i)$
is naturally defined as the unique observation $o_2 \in \Obs_2$ 
such that $s_i \subseteq o_2$). From this path in the DAG, we 
construct an infinite path $p_0 p_1 \dots$ in $G$ using K\"onig's Lemma~\cite{Konig36} as follows. 
First, it is easy to show by induction (on $k$) that for every finite prefix
$s_0 s_1 \dots s_k$ and for every $p_k \in s_k$ there exists a path
$p_0 p_1 \dots p_k$ in $G$ such that $p_i \in s_i$ for all $0 \leq i \leq k$.
Note that $p_0 = q_0$ since $s_0 = \{q_0\}$ and that by 
definition of the DAG, for each $s_{i+1}$ ($i=0,\dots,k-1$),
there exist $a \in A_1$, $b \in A_2$, and $o_2 \in \Obs_2$ such that 
$s_{i+1} = \post^G(s_{i},a,b,-) \cap o_2$. Hence, given $p_{i+1} \in s_{i+1}$,
there exist $c_i \in A_3$ and $p_{i} \in s_{i}$ such that 
$\delta(p_{i},a,b,c_i) = p_{i+1}$.

Arranging all these finite paths in a tree, we obtain an
infinite finitely-branching tree which by K\"onig's Lemma~\cite{Konig36} contains an infinite
branch $q_0 q_1 \dots$ that is a path in $G$ and such that $q_i \in s_i$ 
for all $i \geq 0$. 
Now we can construct the strategy $\straa_3$ such that $\straa_3(p_0 \dots p_i) = c_i$.
Since $s_0 s_1 \dots$ satisfies $\alpha'$, it follows that 
$\rho^{\straa_1,\straa_2,\straa_3}_{q_0} = p_0 p_1 \dots$ satisfies $\alpha$,
which completes the proof.
\qed
\end{proof}

%%We proceed with a matching lower bound for the three-player decision problem.

\begin{theorem}[Lower bounds]\label{theo:one-less-informed-lower-bound}
Given a three-player game $G = \tuple{Q, q_0, \delta}$ with player~$1$ less informed 
than player~$2$ and a reachability objective $\alpha$, 
the problem of deciding whether $\exists \straa_1 \in \Straa_1 \cdot \forall \straa_2 \in \Straa_2 \cdot
\exists \straa_3 \in \Straa_3: \rho^{\straa_1,\straa_2,\straa_3}_{q_0} \in \alpha$
is 2-EXPTIME-hard. 
If player~$1$ is blind (and even when player~2 is also blind), then the problem is EXPSPACE-hard.
%\mynote{L: mention hardness when player~$2$ has perfect information ?}
%\mynote{Mention "three-player decision problem" ?}
\end{theorem}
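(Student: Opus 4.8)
The plan is to obtain the two hardness results by reduction from word problems for appropriately space- or time-bounded alternating Turing machines, reusing the machinery already developed for two-player partial-observation games. For the EXPSPACE lower bound when player~1 is blind, I would start from the known fact that deciding the winner in two-player partial-observation games with a blind player~1 and a reachability objective is EXPSPACE-hard~\cite{CDHR07} (this already encodes acceptance of an EXPSPACE alternating Turing machine via a blind player who must "guess" a configuration and be challenged by the informed adversary). The key observation is that a two-player partial-observation game with a blind player~1 is literally a special case of our three-player game: take $A_3$ to be a singleton, give player~3 no influence on $\delta$, and keep $\Obs_1 = \{Q\}$. Since player~1 is blind, player~1 is trivially less informed than player~2, so the instance is a legal three-player instance; the three-player decision problem $\exists\straa_1\,\forall\straa_2\,\exists\straa_3$ then collapses to $\exists\straa_1\,\forall\straa_2$, which is exactly the two-player value. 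This shows EXPSPACE-hardness when player~1 is blind, and for the stronger claim where player~2 is also blind I would instead invoke the variant of the EXPSPACE-hardness construction in~\cite{CDHR07} (or adapt it) in which both players are blind — a blind-versus-blind reachability game is still EXPSPACE-hard because the challenging power of player~2 can be simulated by forcing player~1 to maintain a verifiable certificate rather than by letting player~2 observe.

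For the 2-EXPTIME lower bound in the general case, the natural source is acceptance of an alternating Turing machine running in exponential space (equivalently, doubly-exponential time). Here I would genuinely use all three players. Player~1 (less informed than player~2) plays the role of the machine's existential choices and is forced to output, symbol by symbol, the doubly-exponential-length configurations; because player~1 has only exponentially many observations, it cannot "see" the full configuration and must commit to it blindly relative to player~2's finer knowledge, which is precisely what makes the simulation of a doubly-exponential tape possible with only an exponential state space — this is the standard "counter in the observation" trick used to push partial-observation reachability from EXPSPACE up to 2-EXPTIME. Player~2, being more informed, plays the universal branching of the ATM and also issues the consistency challenges (pointing at a pair of adjacent cells across two successive configurations that violate the transition rule), exactly as the informed adversary does in the two-player EXPTIME-hardness proof, but now over the doubly-exponential time scale. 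Player~3, with perfect observation and quantified last, is used to resolve the existentially-quantified details that player~1 could not see: once player~2 has committed to a challenge location, player~3 supplies the witnessing cell contents, so that player~1's inability to observe does not cost anything when player~1 was in fact honest. The reachability target is "the challenge was met / the computation accepted".

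The main obstacle I expect is getting the information structure exactly right: we need player~1 strictly less informed than player~2, player~3 perfectly informed, and the quantifier order $\exists\straa_1\,\forall\straa_2\,\exists\straa_3$ to faithfully encode "$\exists$ accepting run: for all universal choices, the existential completion is consistent." In particular, one must ensure that player~3's perfect observation plus late quantification does not let player~3 leak information back to help player~1 in a way that trivializes the game, and conversely that player~3 is powerful enough to rescue an honest player~1. The clean way to handle this, which I would follow, is to make player~1's output stream deterministic-looking from player~2's viewpoint and to let the gadget that compares two cells be entirely downstream of player~2's challenge, so that player~3's choices are causally after and informed by the challenge but have no channel back to player~1's earlier moves; Remark~\ref{rem:final-player} (perfect information coincides with blind for path construction in a graph) can then be invoked to argue that player~3's role is really just "pick a path in the residual graph," which keeps the construction modular. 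Once the gadgets are in place, correctness is the routine two-direction argument (accepting run $\Rightarrow$ winning strategy for player~1 using the run and letting player~3 echo it; winning strategy $\Rightarrow$ accepting run by extracting, against all of player~2's challenges, a globally consistent tableau), and the size bounds are immediate since all counters are encoded in binary within the (exponential) state and observation spaces.
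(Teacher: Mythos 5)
Your EXPSPACE argument rests on a false premise: two-player partial-observation reachability games with a blind player~1 are \emph{not} EXPSPACE-hard --- they are solvable in PSPACE (this is exactly the upper bound from~\cite{CDHR07} that the paper itself invokes in the proof of Theorem~\ref{theo:one-less-informed-upper-bound}), so embedding such a game as a three-player instance with a trivial player~3 can yield at best PSPACE-hardness. Making player~2 blind as well only weakens the adversary further, so the "both blind" variant you propose to invoke does not exist as a two-player result. The third player is not dispensable here: the paper obtains EXPSPACE-hardness by running the \emph{same} three-player reduction on exponential-space \emph{nondeterministic} Turing machines (player~1 announces configurations one symbol at a time, player~2 silently decides to check a cell, player~3 audits player~2), and it is precisely the presence of player~3 that lets a blind player~2 be trusted to point at "the same cell one configuration later."

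For the 2-EXPTIME part you have the right source problem (acceptance for exponential-space alternating machines) and the right split of the alternation (player~1 resolves or-states, player~2 resolves and-states, with player~2's transition choices made visible to player~1 so that player~1 remains less informed), but the proposal never confronts the one step where the third player is actually needed. Configurations are exponentially long (not doubly-exponentially long, as you write --- the machine is exponential-\emph{space}), while the game graph is polynomial; so when player~2 challenges a cell it must certify that the two compared symbols are exactly $2^n$ positions apart, and no polynomial-size game structure or observation set can count that far. The paper's solution is to force player~2 to announce an $n$-bit counter alongside player~1's symbols, and to use player~3 --- quantified last, with its chosen bit position hidden from player~2 --- to spot-check a single bit of each increment, sending the play to the target state if player~2 miscounts. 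Your player~3, who "supplies the witnessing cell contents" to rescue an honest player~1, does not serve this purpose, and without some mechanism certifying the exponential offset the consistency check, and hence the whole reduction, does not go through. Finally, Remark~\ref{rem:final-player} cannot carry the weight you place on it: it only says that player~3's observations are irrelevant, not that player~3's role collapses to choosing a path in a residual graph independent of the gadget design.
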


\begin{proof}%[sketch]
The proof of 2-EXPTIME-hardness is obtained by a polynomial-time reduction of the membership problem
for exponential-space \emph{alternating} Turing machines to the three-player problem. The same reduction for
the special case of exponential-space \emph{nondeterministic} Turing machines shows
EXPSPACE-hardness when player~$1$ is blind (because our reduction yields
a game in which player~$1$ is blind when we start from a nondeterministic Turing
machine). 
The membership problem for Turing machines is to decide, given a Turing machine $M$
and a finite word $w$, whether $M$ accepts $w$. 
The membership problem is 2-EXPTIME-complete for exponential-space alternating 
Turing machines, and EXPSPACE-complete for exponential-space nondeterministic 
Turing machines~\cite{papa-book}.

An alternating Turing
machine is a tuple $M = \tuple{Q_{\lor},Q_{\land},\Sigma, \Gamma, \Delta, q_0, q_{acc}, q_{rej}}$
where the state space $Q = Q_{\lor} \cup Q_{\land}$ consists of the set $Q_{\lor}$ of or-states,
and the set $Q_{\land}$ of and-states. The input alphabet is $\Sigma$,
the tape alphabet is $\Gamma = \Sigma \cup \{\#\}$ where $\#$ is the blank symbol.
The initial state is $q_0$, the accepting state is $q_{acc}$, and the rejecting state is $q_{rej}$. 
The transition relation is $\Delta \subseteq Q \times \Gamma \times Q \times \Gamma \times \{-1,1\}$,
where a transition $(q,\gamma,q',\gamma',d) \in \Delta$ intuitively means that,
given the machine is in state $q$, and the symbol under the tape head is $\gamma$, 
the machine can move to state $q'$, replace the symbol under the tape head 
by $\gamma'$, and move the tape head to the neighbor cell in direction~$d$.
A configuration~$c$ of~$M$ is a sequence $c \in (\Gamma \cup (Q \times \Gamma))^{\omega}$
with exactly one symbol in $Q \times \Gamma$, which indicates the current state 
of the machine and the position of the tape head.
The initial configuration of $M$ on $w = a_0 a_1 \dots a_n$ 
is $c_0 = (q_0,a_0) \cdot a_1 \cdot a_2 \cdot \dots \cdot a_n \cdot \#^{\omega}$.
Given the initial configuration of $M$ on $w$,
it is routine to define the execution trees of $M$ where at least one successor 
of each configuration in an or-state, and all successors of 
the configurations in an and-state are present (and we assume that all
branches reach either $q_{acc}$ or $q_{rej}$), and to say that $M$ accepts
$w$ if all branches of some execution tree reach $q_{acc}$. Note that $Q_{\land} = \emptyset$ 
for nondeterministic Turing machines, and in that case the execution tree reduces to a single path.
A Turing machine $M$ uses exponential space if for all words $w$, 
all configurations in the execution of $M$ on $w$ contain at most 
$2^{O(\abs{w})}$ non-blank symbols.

We present the key steps of our reduction from alternating Turing machines. Given a Turing machine
$M$ and a word $w$, we construct a three-player game with reachability objective
in which player~$1$ and player~$2$ have to simulate the execution of $M$ on $w$,
and player~$1$ has to announce the successive configurations and transitions of the machine along 
the execution. Player~$1$ announces configurations one symbol at a time, thus the alphabet
of player~$1$ is $A_1 = \Gamma \cup (Q \times \Gamma) \cup \Delta$. 
In an initialization phase, the transition
relation of the game forces player~$1$ to announce the initial configuration  
$c_0$ (this can be done with $O(n)$ states in the game, where $n = \abs{w}$).
Then, the game proceeds to a loop where player~$1$ keeps announcing symbols of
configurations. At all times along the execution, some finite information is stored
in the finite state space of the game: a window of the last three symbols $\z_{1}, \z_{2}, \z_3$
announced by player~$1$, as well as the last symbol $\head \in Q \times \Gamma$ 
announced by player~$1$ (that indicates the current machine state and the position
of the tape head). After the initialization phase, we should have 
$\z_{1} = \z_{2} = \z_3 = \#$ and $\head = (q_0,a_0)$. When player~$1$ has 
announced a full configuration, he moves to a state of the game where either player~$1$
or player~$2$ has to announce a transition of the machine: for $\head = (p,a)$,
if $p \in Q_{\lor}$, then player~$1$ chooses the next transition, and if 
$p \in Q_{\land}$, then player~$2$ chooses. Note that the transitions chosen
by player~$2$ are visible to player~$1$ and this is the only information
that player~$1$ observes. Hence player~$1$ is less informed than player~$2$, and
both player~$1$ and player~$2$ are blind when the machine is nondeterministic.
If a transition $(q,\gamma,q',\gamma',d)$
is chosen by player~$i$, and either $p \neq q$ or $a \neq \gamma$, then player~$i$
loses (i.e., a sink state is reached to let player~$1$ lose, and the target state 
of the reachability objective is reached to let player~$2$ lose).
If at some point player~$1$ announces a symbol $(p,a)$ with $p = q_{acc}$,
then player~$1$ wins the game.

The role of player~$2$ is to check that player~$1$ faithfully simulates the 
execution of the Turing machine, and correctly announces the configurations.
After every announcement of a symbol by player~$1$, the game offers the possibility
to player~$2$ to compare this symbol with the symbol at the same position in the next configuration.
We say that player~$2$ \emph{checks} (and whether player~$2$ checks or not is not
visible to player~$1$), and the checked symbol is stored as $\z_{2}$. 
Note that player~$2$ can be blind to check because player~$2$ fixes his strategy
after player~$1$.
The window $\z_{1}, \z_{2}, \z_3$ stored in the
state space of the game provides enough information to update the middle cell $\z_{2}$
in the next configuration, and it allows the game to verify the check of player~$2$.
However, the distance (in number of steps) between the same position in two 
consecutive configurations is exponential (say $2^n$ for simplicity), 
and the state space of the game is not large enough to check
that such a distance exists between the two symbols compared by player~$2$.
We use player~$3$ to check that player~$2$ makes a comparison at the correct position. 
When player~$2$ decides to check, he has to count from $0$ to $2^n$ by announcing after 
every symbol of player~$1$ a sequence of $n$ bits, initially all zeros (again,
this can be enforced by the structure of the game with  $O(n)$ states). It is
then the responsibility of player~$3$ to check that player~$2$ counts correctly.
To check this, player~$3$ can at any time choose a bit position $p \in \{0,\dots,n-1\}$ and 
store the bit value~$b_p$ announced by player~$2$ at position~$p$. 
The value of $b_p$ and $p$ is not visible to player~$2$. While player~$2$ announces
the bits $b_{p+1}, \dots, b_{n-1}$ at position $p+1, \dots, n-1$, 
the finite state of the game is used to flip the value of $b_p$ if all bits 
$b_{p+1}, \dots, b_{n-1}$ are equal to $1$, hence updating $b_p$ to the value 
of the $p$-th bit in what should be the next announcement of player~$2$.
In the next bit sequence announced by player~$2$, the $p$-th bit is compared with~$b_p$. If they match, then 
the game goes to a sink state (as player~$2$ has faithfully counted), and if they differ
then the game goes to the target state (as player~$2$ is caught cheating).
It can be shown that this can be enforced by the structure of the game with $O(n^2)$ states,
that is $O(n)$ states for each value of $p$. As before, whether player~$3$
checks or not is not visible to player~$2$.

Note that the checks of player~$2$ and player~$3$ are one-shot: the game will
be over (either in a sink or target state) when the check is finished. This is
enough to ensure a faithful simulation by player~$1$, and a faithful counting
by player~$2$, because $(1)$ partial observation allows to hide to a player
the time when a check occurs, and $(2)$ player~$2$ fixes his strategy
after player~$1$ (and player~$3$ after player~$2$), thus they can decide 
to run a check exactly when player~$1$ (or player~$2$) is not faithful. 
This ensures that player~$1$ does not win if he does not simulate the execution 
of $M$ on $w$, and that player~$2$ does not win if he does not count correctly.

Hence this reduction ensures that $M$ accepts $w$ if and only if the answer to 
the three-player game problem is {\sc Yes}, where the reachability objective is
satisfied if player~$1$ eventually announces
that the machine has reached $q_{acc}$ (that is if $M$ accepts $w$), or
if player~$2$ cheats in counting, which can be detected by player~$3$.
\qed
\end{proof}

\section{Three-Player Games with Player 1 Perfect}\label{sec:player-one-perfect}

When player 2 is less informed than player 1, we show that three-player games 
get much more complicated (even in the special case where player~$1$ has
perfect information). We note that for reachability objectives, 
the three-player decision problem is equivalent to the qualitative analysis 
of positive winning in two-player stochastic games, and we show that the techniques
developed in the analysis of two-player stochastic games can be extended
to solve the three-player decision problem with safety objectives as well.

%%\mynote{L: note that what follows is for general three-player games, not just for player~$1$ perfect.}

%\begin{remark}\label{rem:eq}
For reachability objectives, the three-player decision problem is equivalent
to the problem of positive winning in two-player stochastic games where the 
third player is replaced by a probabilistic choice over the action
set with uniform probability. Intuitively, after player~$1$ and player~$2$
fixes their strategy, the fact that player~$3$ can construct a (finite)
path to the target set is equivalent to the fact that such a path
has positive probability when the choices of player~$3$ are replaced 
by uniform probabilistic transitions.
%\end{remark}%
%
Given a three-player game $G = \tuple{Q, q_0, \delta}$,
let $\Uniform(G) = \tuple{Q, q_0, \delta'}$ be the two-player partial-observation 
\emph{stochastic} game (with same state space, action sets, and observations for player~$1$ 
and player~$2$) where $\delta'(q,a_1,a_2)(q') = 
\frac{\left\lvert\{a_3 \mid \delta(q,a_1,a_2,a_3)=q'\}\right\rvert}{\abs{A_3}}$
for all $a_1 \in A_1$, $a_2 \in A_2$, and $q,q' \in Q$.
Formally, the equivalence result is presented in Lemma~\ref{lem:uniform}, and the 
equivalence holds for all three-player games (not restricted to three-player games
where player~1 has perfect information).
However, we will use Lemma~\ref{lem:uniform} to establish results for three-player 
games where player~1 has perfect information.

\begin{lemma}\label{lem:uniform}
Given a three-player game $G$ and a reachability objective $\alpha$, 
the answer to the three-player decision problem for $\tuple{G,\alpha}$ is {\sc Yes} 
if and only if player~$1$ is positive winning for $\alpha$ in 
the two-player partial-observation stochastic game $\Uniform(G)$.
\end{lemma}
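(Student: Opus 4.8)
The plan is to establish the two directions of the equivalence by transferring strategies between the three-player game $G$ and the stochastic game $\Uniform(G)$, using the fact that in $\Uniform(G)$ the probabilistic choice has the \emph{same support} as the set of moves available to player~$3$ in $G$. The crucial observation, already hinted at in the paragraph preceding the lemma, is that for a reachability objective, player~$3$'s ability to \emph{construct} a finite path to $\target$ is a purely qualitative (support-based) property: a finite path $\rho$ exists with $\delta(q_i,a_1^i,a_2^i,a_3^i)=q_{i+1}$ for suitable $a_3^i$ if and only if that same path has strictly positive probability under $\delta'$, since $\delta'(q,a_1,a_2)(q')>0 \iff \exists a_3: \delta(q,a_1,a_2,a_3)=q'$.

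First I would note that observation-based strategies $\straa_1 \in \Straa_1$ and $\straa_2 \in \Straa_2$ are literally the same objects in $G$ and in $\Uniform(G)$ (the two games share state space, action alphabets, and observations for players~1 and~2), so fixing $\straa_1,\straa_2$ on both sides is unambiguous. For the ``only if'' direction, suppose player~1 wins the three-player decision problem via some $\straa_1$: for every $\straa_2$ there is $\straa_3$ with $\rho^{\straa_1,\straa_2,\straa_3}_{q_0} \in \Reach(\target)$, i.e.\ some prefix $q_0 \dots q_k$ hits $\target$. I would argue that the \emph{same} $\straa_1$ is positive winning in $\Uniform(G)$: fix any $\straa_2$ in $\Uniform(G)$, take the witnessing $\straa_3$ from the three-player game, and observe that the finite prefix $q_0\dots q_k$ of $\rho^{\straa_1,\straa_2,\straa_3}_{q_0}$ is a cylinder of positive measure under $\Prb^{\straa_1,\straa_2}_{q_0}$ because each one-step transition $\delta(q_i,\straa_1(\cdot),\straa_2(\cdot),a_3^i)=q_{i+1}$ forces $\delta'(q_i,\straa_1(\cdot),\straa_2(\cdot))(q_{i+1}) \geq \frac{1}{\abs{A_3}} > 0$; multiplying these finitely many positive probabilities gives $\Prb^{\straa_1,\straa_2}_{q_0}(\Reach(\target)) \geq \prod_{i<k}\frac{1}{\abs{A_3}} > 0$.

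For the ``if'' direction, suppose $\straa_1$ is positive winning in $\Uniform(G)$. Again I claim the same $\straa_1$ wins the three-player decision problem. Fix an arbitrary $\straa_2 \in \Straa_2$; then $\Prb^{\straa_1,\straa_2}_{q_0}(\Reach(\target)) > 0$, so there exists a finite prefix $q_0 \dots q_k$ with $q_k \in \target$ and with strictly positive probability, meaning $\delta'(q_i,\straa_1(q_0\dots q_i),\straa_2(q_0\dots q_i))(q_{i+1}) > 0$ for each $i<k$. By definition of $\delta'$ this yields, for each $i$, an action $a_3^i \in A_3$ with $\delta(q_i,\straa_1(q_0\dots q_i),\straa_2(q_0\dots q_i),a_3^i)=q_{i+1}$. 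I then define $\straa_3$ to play $a_3^i$ after the prefix $q_0\dots q_i$ for $i<k$ (and arbitrarily elsewhere; since player~3 has perfect observation by Remark~\ref{rem:final-player} this is a legal observation-based strategy); the outcome $\rho^{\straa_1,\straa_2,\straa_3}_{q_0}$ then has $q_0\dots q_k$ as a prefix and hence lies in $\Reach(\target)$. I expect the main (mild) subtlety to be the bookkeeping around player~3's information: one must check that replacing player~3 by a fixed uniform distribution, independent of player~3's observations, is harmless, which is exactly the content of Remark~\ref{rem:final-player} — once $\straa_1,\straa_2$ are fixed the game is a graph, and reachability of a vertex along a path does not depend on how player~3 observes the graph, only on which edges exist. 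With that remark in hand, both directions are short, so the ``hard part'' is really just stating the support-preservation argument cleanly rather than any genuine obstacle.
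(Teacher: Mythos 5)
Your proof is correct and follows exactly the support-preservation argument that the paper itself gives (only informally, in the paragraph preceding the lemma): a finite path to $\target$ exists for some choice of player-$3$ actions if and only if the corresponding cylinder has positive probability under the uniform distribution, with Remark~\ref{rem:final-player} disposing of player~$3$'s observations. Your write-up is simply a careful formalization of that same idea, with both strategy transfers handled correctly.
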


%\mynote{L: should we make clear that Lemma~\ref{lem:uniform} holds in general 
%(not only for player~$1$ perfect ?}
%It follows from Lemma~\ref{lem:uniform} 
%that the result of Theorem~\ref{theo:one-less-informed-upper-bound} 
%generalizes the complexity result of~\cite[Theorem~1]{CD12},
%which established EXPTIME-completeness in stochastic two-player 
%reachability games
%with player~$2$ having perfect information. In particular, when
%both player~$1$ and player~$2$ have partial observation,
%Theorem~\ref{theo:one-less-informed-upper-bound} can be used to
%show that two-player stochastic games with reachability objective
%can be solved in 2-EXPTIME when player~$1$ is less informed than player~$2$,
%extending the results of~\cite{CD12}.

\paragraph{Reachability objectives.}
Even in the special case where player~$1$ has
perfect information, and for reachability objectives, non-elementary
memory is necessary in general for player~$1$ to win in three-player games. 
This result follows from Lemma~\ref{lem:uniform}
%established in Section~\ref{sec:player-one-less},
and from the result of~\cite[Example 4.2 Journal version]{CD12} showing that   % ~\cite[Theorem~2]{CD12}
non-elementary memory is necessary to win with positive probability
in two-player stochastic games.
It also follows from Lemma~\ref{lem:uniform} and the result
of~\cite[Corollary 4.9 Journal version]{CD12} that the three-player decision problem % ~\cite[Corollary~1]{CD12}
for reachability games is decidable in non-elementary time.
%We extend the decidability result to safety objectives (see Section~\ref{app:safety}
%in Appendix).

\paragraph{Safety objectives.}
We show that the three-player decision problem can be solved for games
with a safety objective when player~$1$ has perfect information.
The proof is using the \emph{counting abstraction} of~\cite[Section 4.2 Journal version]{CD12}    % ~\cite[Section IV]{CD12} \url{http://tocl.acm.org/accepted/chatterjee-doyen_partial.pdf}    
and shows that the answer to the three-player decision problem for safety objective $\Safe(\target)$
is {\sc Yes} if and 
only if there exists a winning strategy in the two-player counting-abstraction game
with the safety objective to visit only counting functions (i.e., essentially
tuples of natural numbers) with support contained in the target states $\target$.
Intuitively, the counting abstraction is as follows: with every knowledge of player~2
we store a tuple of counters, one for each state in the knowledge.
The counters denote the number of possible distinct paths to the states of the knowledge, 
and the abstraction treats large enough values as infinite (value $\omega$).
The counting-abstraction game is monotone with regards to the natural
partial order over counting functions, and therefore it is well-structured
and can be solved by constructing a self-covering unraveling tree, i.e. a tree
in which the successors of a node are constructed only if this node
has no greater ancestor. The properties of well-structured systems 
(well-quasi-ordering and K\"onig's Lemma)
ensure that this tree is finite, and that there exists a strategy to ensure only 
supports contained in the target states $\target$ are visited if and only if
there exists a winning strategy in the counting-abstraction game (in a leaf
of the tree, one can copy the strategy played in a greater ancestor).
It follows that the three-player decision problem for safety games
is equivalent the problem of solving a safety game over this finite tree.

\begin{theorem}\label{theo:player-one-perfect}
When player 1 has perfect information, the three-player decision problem is decidable for both
reachability and safety games, and for reachability games 
memory of size non-elementary is necessary in general for player~$1$.
\end{theorem}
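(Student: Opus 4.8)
The plan is to establish Theorem~\ref{theo:player-one-perfect} by combining Lemma~\ref{lem:uniform} with known results about two-player partial-observation stochastic games, and by developing the counting-abstraction argument for safety. The theorem has three assertions: (i)~decidability for reachability games, (ii)~the non-elementary memory lower bound for reachability games, and (iii)~decidability for safety games. I would treat them in that order.

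For (i) and (ii), the work is essentially a reduction. By Lemma~\ref{lem:uniform}, when player~$1$ has perfect information the three-player decision problem for a reachability objective $\alpha$ is equivalent to deciding positive winning for $\alpha$ in the two-player partial-observation stochastic game $\Uniform(G)$, where player~$1$ is perfectly informed and player~$2$ has partial observation. For (i), I invoke~\cite[Corollary 4.9 Journal version]{CD12}, which shows this problem is decidable in non-elementary time; since $\Uniform(G)$ has the same state space and observation structure as $G$ and is computable in time polynomial in $\abs{G}$, decidability of the three-player problem follows. For (ii), I transfer the lower bound of~\cite[Example 4.2 Journal version]{CD12}: there is a family of two-player stochastic games with a perfectly informed player~$1$ in which every positive-winning strategy requires non-elementary memory. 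Running Lemma~\ref{lem:uniform} in the reverse direction (any such stochastic game arises as $\Uniform(G)$ for a suitable three-player game $G$, or more directly, the memory lower bound is preserved under the equivalence since winning strategies correspond), these become three-player reachability games witnessing the claimed memory lower bound.

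For (iii), the safety case, I would use the counting abstraction of~\cite[Section 4.2 Journal version]{CD12}. The construction: states of the abstract game are \emph{counting functions} mapping each state in a player-$2$ knowledge set to a counter in $\nat \cup \{\omega\}$, where the counter at $q$ records (up to a saturation threshold, above which it becomes $\omega$) the number of distinct finite paths player~$3$ can use to reach $q$ consistently with the fixed strategies; since player~$1$ is perfectly informed, the knowledge bookkeeping is exactly as in the subset/knowledge construction, now decorated with counters. I would first show that the three-player safety problem has answer {\sc Yes} iff player~$1$ wins the safety objective ``only counting functions with support $\subseteq \target$ are visited'' in this counting-abstraction game; this uses Lemma~\ref{lem:uniform}-style reasoning (positive reachability for player~$3$ corresponds to some counter becoming nonzero) together with the standard observation that for safety it suffices to track supports and that counters saturate. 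Then I would argue the counting-abstraction game is \emph{monotone} with respect to the componentwise order on counting functions (larger counters only help player~$3$, hence hurt player~$1$), so it is a well-structured game: constructing the self-covering unraveling tree, whose successors of a node are generated only when the node has no $\le$-ancestor, yields a \emph{finite} tree by the well-quasi-ordering of $(\nat\cup\{\omega\})^k$ and König's Lemma. Finally I would show player~$1$ wins the original safety game iff player~$1$ wins the finite safety game on this tree (in a leaf, copy the strategy from the dominating ancestor), which is decidable.

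The main obstacle is step (iii), specifically making precise the equivalence between the three-player safety problem and safety in the counting-abstraction game, and the accompanying monotonicity/finiteness argument. One must be careful that the counter semantics (what exactly ``$\omega$'' abstracts, and why saturating at a threshold does not change who wins the safety objective) is sound and complete, and that the strategy-copying argument at the leaves of the self-covering tree genuinely produces a winning strategy in the infinite game — this is where the well-structured-systems machinery does the real work. I expect this to go through by following~\cite[Section 4.2 Journal version]{CD12} closely, since the perfect-information assumption on player~$1$ is precisely what keeps the abstraction finitely branching and the knowledge updates deterministic; the reachability parts (i)--(ii) are comparatively immediate given Lemma~\ref{lem:uniform}.
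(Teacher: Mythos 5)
Your proposal follows the paper's argument essentially verbatim: decidability and the non-elementary memory lower bound for reachability are obtained from Lemma~\ref{lem:uniform} combined with~\cite[Corollary 4.9 Journal version]{CD12} and~\cite[Example 4.2 Journal version]{CD12} respectively, and decidability for safety is obtained via the counting abstraction of~\cite[Section 4.2 Journal version]{CD12} with the same monotonicity, well-quasi-ordering, self-covering-tree, and strategy-copying ingredients the paper uses. The one detail to recheck is the direction of your monotonicity remark: player~$3$ is existential and in coalition with player~$1$, so larger counting functions (more surviving paths) help rather than hurt player~$1$; this does not change the structure of the argument, which matches the paper's.
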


\section{Four-Player Games}\label{sec:more-than-three}

We show that the results presented for three-player games
extend to games with four players (the fourth player is universal 
and perfectly informed). The definition of four-player games
and related notions is a straightforward extension of Section~\ref{sec:three-players}.

%\mynote{L: definition of alternating tree automata.}
In a four-player game with player~$1$ less informed than player~$2$, and perfect information
for both player~$3$ and player~$4$, consider the \emph{four-player decision problem}
which is to decide if 
$\exists \straa_1 \in \Straa_1 \cdot \forall \straa_2 \in \Straa_2 \cdot
\exists \straa_3 \in \Straa_3 \cdot \forall \straa_4 \in \Straa_4: \rho^{\straa_1,\straa_2,\straa_3,\straa_4}_{q_0} \in \alpha$
for a parity objective $\alpha$.
Since player~$3$ and player~$4$ have perfect information, we assume without loss
of generality that the game is turn-based for them, that is there
is a partition of the state space $Q$ into two sets $Q_3$ and $Q_4$ (where
$Q = Q_3 \cup Q_4$) such that the transition function is the union
of $\delta_3: Q_3 \times A_1 \times A_2 \times A_3 \to Q$ and
$\delta_4: Q_4 \times A_1 \times A_2 \times A_4 \to Q$.
Strategies and outcomes are defined analogously to three-player games.
A strategy of player~$i \in \{3,4\}$ is of the form 
$\straa_i: Q^* \cdot Q_i \to A_i$.

By determinacy of perfect-information turn-based games with countable state space~\cite{Martin75},
the negation of the four-player decision problem is equivalent to $\forall \straa_1 \in \Straa_1 \cdot \exists \straa_2 \in \Straa_2 \cdot
\exists \straa_4 \in \Straa_4 \cdot \forall \straa_3 \in \Straa_3: 
\rho^{\straa_1,\straa_2,\straa_3,\straa_4}_{q_0} \in \alpha$.
Once the strategies $\straa_1$ and $\straa_2$ are fixed, the condition 
$\exists \straa_4 \in \Straa_4 \cdot \forall \straa_3 \in \Straa_3: 
\rho^{\straa_1,\straa_2,\straa_3,\straa_4}_{q_0} \in \alpha$ can be viewed
as the membership problem for a tree $t^{\straa_1,\straa_2}$
in the language of an alternating parity tree automaton~\cite{CDNV14}
with state space~$Q$
where $t^{\straa_1,\straa_2}$ is the $(A_1 \times A_2)$-labeled tree $(T,V)$
where $T = \Obs_2^+$ and $V(\rho) = (\straa_1(\obs_1(\rho)), \straa_2(\rho))$ for all $\rho \in T$.

By the results of~\cite{EJ91}, if there exists an accepting $(\Obs_2^+ \times Q)$-labeled run tree $(T_r, r)$
for an input tree $t^{\straa_1,\straa_2}$ in an alternating parity tree automaton,
then there exists a \emph{memoryless} accepting run tree, that is such that 
for all nodes $x,y \in T_r$ such that $\abs{x} = \abs{y}$ and $r(x) = r(y)$,
the subtrees of $T_r$ rooted at $x$ and $y$ are isomorphic.
Since the membership problem is equivalent to a two-player parity game played
on the structure of the alternating automaton, a memoryless accepting run tree 
can be viewed as a winning strategy $\straa_4: \Obs_2^+ \times Q \to A_4$,
or equivalently $\straa_4: \Obs_2^+ \to (Q \to A_4)$ such that for all
strategies $\straa_3: T_r \to A_3$, the resulting infinite branch in the tree $T_r$
satisfies the parity objective $\alpha$.

It follows from this that the (negation of the) original question 
$\forall \straa_1 \in \Straa_1 \cdot \exists \straa_2 \in \Straa_2 \cdot
\exists \straa_4 \in \Straa_4 \cdot \forall \straa_3 \in \Straa_3: 
\rho^{\straa_1,\straa_2,\straa_3,\straa_4}_{q_0} \in \alpha$ is equivalent
to $\forall \straa_1 \in \Straa_1 \cdot \exists \straa_{24} \in \Straa_{24} \cdot
\forall \straa_3 \in \Straa_3: \rho^{\straa_1,\straa_{24},\straa_3}_{q_0} \in \alpha$
where $\Straa_{24} = \Obs_2^+ \to (A_2 \times (Q \to A_4))$ is the set of strategies
of a player (call it player 24) with observations $\Obs_2$ and action set $A_2' = A_2 \times (Q \to A_4)$,
and the outcome $\rho^{\straa_1,\straa_{24},\straa_3}_{q_0}$ is defined
as expected in a three-player game (played by player~$1$, player~$24$, and player~$3$)
with transition function $\delta': Q \times A_1 \times A_2' \times A_3 \to Q$
defined by $\delta'(q,a_1,(a_2,f),a_3) = \delta(q,a_1,a_2,a_3,f(q))$.

Hence the original question (and its negation) for four-player games 
reduces in polynomial time to solving a three-player game with the first player less 
informed than the second player. Hardness follows from the special case
of three-player games.

\begin{theorem}\label{theo:player-four}
The four-player decision problem with player~$1$ less informed than player~$2$,
and perfect information for both player~$3$ and player~$4$
is 2-EXPTIME-complete for parity objectives.
\end{theorem}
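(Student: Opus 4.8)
The plan is to prove the two directions of the complexity bound separately, reusing the three-player machinery of Section~\ref{sec:player-one-less} for both. For the \emph{upper bound}, I would follow exactly the chain of reductions sketched in the discussion preceding the statement: turn the four-player decision problem into an equivalent three-player problem with player~$1$ less informed than player~$2$, and then invoke Theorem~\ref{theo:one-less-informed-upper-bound}. The first step is to apply determinacy of perfect-information turn-based games~\cite{Martin75} (valid over the countable state space arising here) to flip the last two quantifiers, rewriting the negation of the four-player question in the form $\forall \straa_1 \cdot \exists \straa_2 \cdot \exists \straa_4 \cdot \forall \straa_3$. Next, for fixed $\straa_1,\straa_2$ the remaining $\exists \straa_4 \cdot \forall \straa_3$ statement is recast as a membership question for the tree $t^{\straa_1,\straa_2}$ in an alternating parity tree automaton with state space~$Q$ (as in~\cite{CDNV14}). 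The key lemma I would then use is the memoryless-run-tree property of alternating parity automata~\cite{EJ91}: an accepting run exists iff a memoryless one does, where "memoryless" means subtrees rooted at nodes at the same depth with the same automaton state are isomorphic. This lets me bundle $\straa_2$ and the memoryless $\straa_4$ into a single strategy $\straa_{24}:\Obs_2^+ \to (A_2 \times (Q \to A_4))$ of a new "player~$24$" with observation set $\Obs_2$, yielding an equivalent three-player game played by player~$1$, player~$24$, player~$3$ with transition function $\delta'(q,a_1,(a_2,f),a_3)=\delta(q,a_1,a_2,a_3,f(q))$. Since player~$1$ is less informed than player~$2$ and hence than player~$24$, Theorem~\ref{theo:one-less-informed-upper-bound} gives a 2-EXPTIME decision procedure, and this whole translation is polynomial-time.

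For the \emph{lower bound}, I would observe that three-player games with player~$1$ less informed than player~$2$ are the special case of four-player games in which player~$4$ is vacuous (take $A_4$ a singleton, so $Q_4 = \emptyset$ or all of $\delta_4$ is trivial). Theorem~\ref{theo:one-less-informed-lower-bound} then immediately yields 2-EXPTIME-hardness, even already for reachability objectives. Combining the two directions gives 2-EXPTIME-completeness for parity objectives.

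The main obstacle I anticipate is making the quantifier-swap and automaton step fully rigorous. Specifically, two points need care: (i) the determinacy argument applies to an \emph{infinite} (countably-branching) game tree once $\straa_1$ is fixed but $\straa_2,\straa_3,\straa_4$ range freely, so I must check that the perfect-information turn-based sub-game between players~$3$ and~$4$ (with $\straa_1,\straa_2$ fixed) is indeed a Borel — in fact $\omega$-regular — game on a countable arena, so that Martin's theorem~\cite{Martin75} applies; and (ii) the passage to player~$24$ must preserve the observation structure: player~$24$ observes precisely $\Obs_2$ and the memoryless strategy $\straa_4$ is allowed to depend on the full state $q$ because player~$4$ had perfect information, which is exactly why $A_2' = A_2 \times (Q \to A_4)$ is the right action set and why the resulting three-player game still has player~$1$ less informed than player~$24$. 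Once these are checked, the equivalence of the four-player question with the three-player question is a routine unfolding of the definitions of outcome plays, and the complexity bounds follow directly from the cited theorems.
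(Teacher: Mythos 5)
Your proposal follows essentially the same route as the paper: determinacy of the perfect-information turn-based subgame between players~$3$ and~$4$ to swap the last two quantifiers, the reformulation of $\exists\straa_4\,\forall\straa_3$ as membership of $t^{\straa_1,\straa_2}$ in an alternating parity tree automaton, the memoryless-run-tree theorem of~\cite{EJ91} to fold $\straa_4$ into a combined player~$24$ with observations $\Obs_2$ and actions $A_2\times(Q\to A_4)$, and then Theorem~\ref{theo:one-less-informed-upper-bound} for the upper bound and Theorem~\ref{theo:one-less-informed-lower-bound} (via the vacuous-player-$4$ specialization) for hardness. This matches the paper's argument, including the two points of care you flag (Borel determinacy on the countable arena, and preservation of the ``player~$1$ less informed'' relation).
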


\begin{remark}[Combinations of strategy quantification.]
We now discuss the various possibilities of strategy quantifiers and 
information of the players in multi-player games.
First, if there are two existential (resp., universal) players 
with incomparable information, then the decision question is 
undecidable~\cite{RP79,PR89}; and if there is a sequence of existential 
(resp., universal) quantification over strategies players such that 
the information of the players form a chain (i.e., in the sequence
of quantification over the players, let the players be~$i_1,i_2,\ldots,i_k$
such that $i_1$ is more informed than $i_2$, $i_2$ more informed than $i_3$
and so on), then with repeated subset construction, the sequence can 
be reduced to one quantification~\cite{PR89,MT01,MW03}. 
Note however that if there is a quantifier alternation between existential
and universal, then even if the information may form a chain, subset 
construction might not be sufficient: for example, if 
player~1 is perfect and player~2 has partial-information, non-elementary 
memory might be necessary (as shown in Section~\ref{sec:player-one-perfect}).
We now discuss the various possibilities of strategy quantification 
in four-player games.
Without loss of generality we consider that the first strategy quantifier is 
existential.
The above argument for sequence of quantifiers (either undecidability 
with incomparable information or the sequence reduces to one) shows that we 
only need to consider the following strategy quantification:
$\exists_1 \forall_2 \exists_3 \forall_4$, where the subscripts denote the 
quantification over strategies for the respective player.
First, note that once the strategies of the first three players are fixed
we obtain a graph, and similar to Remark~\ref{rem:final-player} 
%in graphs perfect-information coincides with blind 
%for construction of a path (see~\cite[Lemma~2]{CD10b} that counting strategies
%are sufficient).
%%Hence 
without loss of generality we consider that player~4 has 
perfect observation.
We now consider the possible cases for player~3 in presence of player~4.
\begin{compactenum}
\item \emph{Perfect observation.} The case when player~3 has perfect 
observation has been solved in the main paper (results of 
Section~\ref{sec:more-than-three}).

\item \emph{Partial observation.} We now consider the case when player~3 has 
partial observation. 
If player~2 is less informed than player~1, then the problem is at least as 
hard as the problem considered in Section~\ref{sec:player-one-perfect}. 
If player~3 is less informed than player~2, then even in the absence of player~1,
the problem is as hard as the negation of the question considered in Section~\ref{sec:player-one-perfect}
(where first a more informed player plays, followed by a less informed player,
just the strategy quantifiers are  $\forall_2 \exists_3 \forall_4$ as 
compared to $\exists_1 \forall_2 \exists_3$ considered in Section~\ref{sec:player-one-perfect}).
Finally, if player~1 is less informed than player~2, and player~2 is less informed
than player~3, then we apply our construction of Section~\ref{sec:player-one-less} twice and 
obtain a double exponential size two-player partial-observation game which can be solved
in 3-EXPTIME.
Recall that in absence of player~4, by Remark~\ref{rem:final-player} whether player~3 has 
partial or perfect information does not matter and we obtain a 2-EXPTIME upper bound; 
whereas in presence of player~4, we obtain a 3-EXPTIME upper bound
if player~3 has partial information (but more informed than player~$2$), 
and a 2-EXPTIME upper bound if player~3 has perfect information 
(Theorem~\ref{theo:player-four}).
\end{compactenum}
\end{remark}

%%\mynote{Remark about strategy quantifiers}

%\input{applications}

\section{Applications}\label{sec:applications}

We now discuss applications of our results in the context
of synthesis and qualitative analysis of two-player partial-observation 
stochastic games.

\smallskip\noindent{\bf Sequential synthesis.}
%\subsection{Sequential synthesis}
The \emph{sequential synthesis} problem consists of an open system of partially 
implemented modules (with possible non-determinism or choices) $M_1, M_2, 
\ldots, M_n$ that need to be refined (i.e., the choices determined by 
strategies) such that the composite system after refinement satisfy a 
specification. 
The system is open in the sense that after the refinement the composite system 
is reactive and interact with an environment.
Consider the problem where first a set $M_1, \ldots, M_k$ of modules are 
refined, then a set $M_{k+1}, \ldots,M_{\ell}$ are refined by an external 
implementor, and finally the remaining set of modules are refined. 
In other words, the modules are refined sequentially: first a set of modules 
whose refinement can be controlled, then a set of modules whose refinement 
cannot be controlled as they are implemented externally, and finally the 
remaining set of modules. 
If the refinements of modules $M_1, \ldots, M_{\ell}$ do not have access to 
private variables of the remaining modules we obtain a partial-observation 
game with four players: the first (existential) player corresponds to the 
refinement of modules $M_1,\ldots,M_k$, the second (universal) player 
corresponds to the refinement of modules $M_{k+1},\ldots,M_{\ell}$, the third 
(existential) player corresponds to the refinement of the remaining modules, 
and the fourth (adversarial) player is the environment. 
If the second player has access to all the variables visible to the first player, 
then player~1 is less informed.

\smallskip\noindent{\bf Two-player partial-observation stochastic games.}
%\subsection{Two-player partial-observation stochastic games}
Our results for four-player games imply new complexity results for 
two-player stochastic games. 
For qualitative analysis (positive and almost-sure winning) under finite-memory 
strategies for the players the following reduction has been established 
in~\cite[Lemma~1]{CDNV14} (see Lemma~2.1 of the arxiv version): 
the probabilistic transition function can be replaced by a 
turn-based gadget consisting of two perfect-observation players, one angelic 
(existential) and one demonic (universal).
The turn-based gadget is the same as used for perfect-observation stochastic 
games~\cite{Cha-Thesis,CJH03}.
In~\cite{CDNV14}, only the special case of perfect observation for player~2 
was considered, and hence the problem reduced to three-player games where only 
player~1 has partial observation and the other two players have perfect 
observation.
In case where player~2 has partial observation, the reduction of~\cite{CDNV14}
requires two perfect-observation players, and gives the problem of four-player 
games (with perfect observation for player~3 and player~4).
Hence when player~1 is less informed, we obtain a 2-EXPTIME upper bound from 
Theorem~\ref{theo:player-four}, and obtain a 2-EXPTIME lower 
bound from Theorem~\ref{theo:one-less-informed-lower-bound} since 
the three-player games problem with player~1 less informed for reachability 
objectives coincides with \emph{positive} winning for two-player partial-observation 
stochastic games (Lemma~\ref{lem:uniform}). 

For \emph{almost-sure} winning, a 2-EXPTIME lower bound can also 
be obtained by an adaptation of the proof of Theorem~\ref{theo:one-less-informed-lower-bound}.
We use the same reduction from exponential-space alternating Turing machines,
with the following changes: $(i)$ the third player is replaced by a 
uniform probability distribution over player-$3$'s moves, thus the reduction
is now to two-player partial-observation stochastic games; $(ii)$
instead of reaching a sink state when player~$2$ 
detects a mistake in the sequence of configurations announced by player~$1$,
the game restarts in the initial state; thus the target state of the reachability
objective is not reached, but player~$1$ gets another chance to faithfully
simulate the Turing machine.

It follows that if the Turing machine accepts, then player~$1$ has an almost-sure
winning strategy by faithfully simulating the execution. Indeed, either $(a)$ player~$2$
never checks, or checks and counts correctly, and then player~$1$ wins since
no mistake is detected, or $(b)$ player~$2$ checks and cheats counting, and then
player~$2$ is caught with positive probability (player~$1$ wins), and with probability
smaller than~$1$ the counting cheat is not detected and thus possibly a (fake) 
mismatch in the symbol announced by player~$1$ is detected. Then the game
restarts. Hence in all cases after finitely many steps, either player~$1$ wins with (fixed) 
positive probability, or the game restarts. It follows that player~$1$ wins
the game with probability~$1$.

If the Turing machine rejects, then player~$1$ cannot win by a faithful
simulation of the execution, and thus he should cheat. The strategy 
of player~$2$ is then to check and to count correctly, ensuring that 
the target state of the reachability objective is not reached, and
the game restarts. Hence for all strategies of player~$1$, there is
a strategy of player~$2$ to always avoid the target state (with probability~$1$),
and thus player~$1$ cannot win almost-surely (he wins with probability~$0$). This completes
the proof of the reduction for almost-sure winning.

Thus we obtain the following result.

\begin{theorem}\label{theo:stochastic-games}
The qualitative analysis problems (almost-sure and positive
winning) for two-player partial-observation stochastic parity games where 
player~1 is less informed than player~2, under finite-memory strategies for 
both players, are 2-EXPTIME-complete. 
\end{theorem}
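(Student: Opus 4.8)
The plan is to assemble the theorem from the reductions and bounds already in place, handling the upper bound, the positive-winning lower bound, and the almost-sure lower bound separately.

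For the \emph{upper bound} I would start from the reduction of~\cite{CDNV14} (Lemma~1 there, Lemma~2.1 of the arxiv version): under finite-memory strategies, the probabilistic transition function of a two-player partial-observation stochastic game can be simulated by a turn-based gadget using two extra perfect-observation players, one existential and one universal. Applied to a game in which player~1 is less informed than player~2 (and player~2 may have proper partial observation), this yields exactly a four-player game with player~1 less informed than player~2 and perfect observation for both player~3 and player~4, preserving the parity objective; Theorem~\ref{theo:player-four} then gives a 2-EXPTIME decision procedure. Since the gadget of~\cite{CDNV14} captures both positive and almost-sure winning, the same argument gives membership in 2-EXPTIME for both modes.

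For the \emph{positive-winning lower bound} I would invoke Lemma~\ref{lem:uniform}: positive winning for player~1 in $\Uniform(G)$ coincides with the answer to the three-player decision problem for $G$ with a reachability objective. By Theorem~\ref{theo:one-less-informed-lower-bound} the latter is already 2-EXPTIME-hard when player~1 is less informed than player~2, and reachability is a special case of parity; moreover the ``yes'' instances produced by that reduction admit finite-memory winning strategies, so the bound survives the finite-memory restriction. The more delicate case is the \emph{almost-sure lower bound}, which I would get by adapting the proof of Theorem~\ref{theo:one-less-informed-lower-bound}, reducing from exponential-space alternating (indeed nondeterministic suffices) Turing machines with two changes: (i) player~3's moves are replaced by a uniform probability distribution, turning the construction into a two-player partial-observation stochastic game; (ii) when player~2 correctly detects a mismatch in the configuration sequence announced by player~1, instead of a losing sink the game resets to its initial state, giving player~1 a fresh chance to simulate $M$ faithfully. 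Correctness then splits in two directions: if $M$ accepts $w$, player~1 plays the faithful simulation, and either player~2 never challenges (or challenges with a correct count), so player~1 eventually announces $q_{acc}$ and wins, or player~2 cheats the count, in which case the probabilistic player~3 catches the cheat with a fixed positive probability (player~1 wins) and otherwise at worst a spurious mismatch causes a restart; hence in each bounded-length round player~1 wins with probability bounded below by a fixed constant or the game restarts, so the target is reached with probability~1. Conversely, if $M$ rejects, player~1 must deviate, and player~2's strategy of always challenging and counting correctly surely forces a restart and never lets the target be reached, so player~1 wins with probability~0.

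The step I expect to be the main obstacle is making this restart-based reduction airtight: one must check that partial observation still hides from player~1 the timing of player~2's challenges and of the probabilistic player~3's checks, that the positive probability of catching a miscount is genuinely bounded below uniformly over rounds (so that a Borel--Cantelli-style argument over the independent rounds yields an almost-sure reachability guarantee on accepting instances), and that on rejecting instances no randomization by player~1 can evade the correctly-counting player~2. Matching the resulting 2-EXPTIME-hardness (for positive and for almost-sure winning) with the 2-EXPTIME upper bound above then completes the proof.
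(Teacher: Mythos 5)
Your proposal follows essentially the same route as the paper's proof: the gadget reduction of~\cite{CDNV14} to four-player games for the 2-EXPTIME upper bound, Lemma~\ref{lem:uniform} combined with Theorem~\ref{theo:one-less-informed-lower-bound} for the positive-winning lower bound, and the restart-modified Turing-machine reduction (uniform distribution replacing player~3, restart instead of a sink on detected mismatches) for the almost-sure lower bound. The only slip is the parenthetical claim that nondeterministic exponential-space machines would suffice: alternation is essential for 2-EXPTIME-hardness, since nondeterministic exponential-space machines only yield EXPSPACE-hardness (that is precisely the blind case of Theorem~\ref{theo:one-less-informed-lower-bound}).
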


\begin{remark}
Note that the lower bounds for Theorem~\ref{theo:stochastic-games} are established
for reachability objectives.
Moreover, it was shown in~\cite[Section~5]{CD12} that for qualitative analysis of 
two-player partial-observation stochastic games with reachability objectives, 
finite-memory strategies suffice, i.e., if there is a strategy to ensure 
almost-sure (resp. positive) winning, then there is a finite-memory strategy.
Thus the results of Theorem~\ref{theo:stochastic-games} hold for reachability 
objectives even without the restriction of finite-memory strategies.
\end{remark}


\begin{thebibliography}{10}

\bibitem{AHK02}
R.~Alur, T.~A. Henzinger, and O.~Kupferman.
\newblock Alternating-time temporal logic.
\newblock {\em Journal of the ACM}, 49:672--713, 2002.

\bibitem{BBG08}
C.~Baier, N.~Bertrand, and M.~Gr{\"o}{\ss}er.
\newblock On decision problems for probabilistic {B}{\"u}chi automata.
\newblock In {\em Proc. of FoSSaCS}, LNCS 4962, pages 287--301. Springer, 2008.

\bibitem{BGG09}
N.~Bertrand, B.~Genest, and H.~Gimbert.
\newblock Qualitative determinacy and decidability of stochastic games with
  signals.
\newblock In {\em Proc. of LICS}, pages 319--328, 2009.

\bibitem{Cha-Thesis}
K.~Chatterjee.
\newblock {\em Stochastic $\omega$-Regular Games}.
\newblock PhD thesis, UC Berkeley, 2007.

\bibitem{CD10b}
K.~Chatterjee and L.~Doyen.
\newblock The complexity of partial-observation parity games.
\newblock In {\em Proc. of LPAR}, LNCS 6397, pages 1--14. Springer, 2010.

\bibitem{CD12}
K.~Chatterjee and L.~Doyen.
\newblock Partial-observation stochastic games: How to win when belief fails.
\newblock In {\em Proc. of LICS 2012; Journal version ACM ToCL}, pages
  175--184. IEEE, 2012.

\bibitem{CDGH10}
K.~Chatterjee, L.~Doyen, H.~Gimbert, and T.~A. Henzinger.
\newblock Randomness for free.
\newblock In {\em Proc. of MFCS}, LNCS 6281, pages 246--257. Springer, 2010.

\bibitem{CDHR07}
K.~Chatterjee, L.~Doyen, T.~A. Henzinger, and J.-F. Raskin.
\newblock Algorithms for omega-regular games of incomplete information.
\newblock {\em Logical Methods in Computer Science}, 3(3:4), 2007.

\bibitem{CDNV14}
K.~Chatterjee, L.~Doyen, S.~Nain, and M.~Y. Vardi.
\newblock The complexity of partial-observation stochastic parity games with
  finite-memory strategies.
\newblock In {\em Proc. of FoSSaCS}, LNCS. Springer, 2014.

\bibitem{CJH03}
K.~Chatterjee, M.~Jurdzi{\'n}ski, and T.~A. Henzinger.
\newblock Simple stochastic parity games.
\newblock In {\em CSL'03}, volume 2803 of {\em LNCS}, pages 100--113. Springer,
  2003.

\bibitem{DDR06}
M.~{De Wulf}, L.~Doyen, and J.-F. Raskin.
\newblock A lattice theory for solving games of imperfect information.
\newblock In {\em Proc. of HSCC}, LNCS 3927, pages 153--168. Springer, 2006.

\bibitem{EJ91}
E.~A. Emerson and C.~Jutla.
\newblock Tree automata, mu-calculus and determinacy.
\newblock In {\em FOCS}, pages 368--377. IEEE, 1991.

\bibitem{FS10}
B.~Finkbeiner and S.~Schewe.
\newblock Coordination logic.
\newblock In {\em Proc. of CSL}, LNCS 6247, pages 305--319. Springer, 2010.

\bibitem{HK99}
T.~A. Henzinger and P.W. Kopke.
\newblock Discrete-time control for rectangular hybrid automata.
\newblock {\em Theor. Comp. Science}, 221:369--392, 1999.

\bibitem{Kechris}
A.~Kechris.
\newblock {\em Classical Descriptive Set Theory}.
\newblock Springer, 1995.

\bibitem{Konig36}
D.~K\"onig.
\newblock {\em Theorie der endlichen und unendlichen Graphen}.
\newblock Akademische Verlagsgesellschaft, Leipzig, 1936.

\bibitem{MT01}
P.~Madhusudan and P.~S. Thiagarajan.
\newblock Distributed controller synthesis for local specifications.
\newblock In {\em ICALP}, pages 396--407, 2001.

\bibitem{Martin75}
D.A. Martin.
\newblock {Borel} determinacy.
\newblock {\em Annals of Mathematics}, 102(2):363--371, 1975.

\bibitem{MW03}
S.~Mohalik and I.~Walukiewicz.
\newblock Distributed games.
\newblock In {\em FSTTCS}, pages 338--351, 2003.

\bibitem{MS87}
D.~E. Muller and P.~E. Schupp.
\newblock Alternating automata on infinite trees.
\newblock {\em Theor. Comput. Sci.}, 54:267--276, 1987.

\bibitem{papa-book}
C.~H. Papadimitriou.
\newblock {\em {Computational complexity}}.
\newblock Addison-Wesley, 1994.

\bibitem{RP79}
G.~L. Peterson and J.~H. Reif.
\newblock Multiple-person alternation.
\newblock In {\em FOCS}, pages 348--363, 1979.

\bibitem{PR89}
A.~Pnueli and R.~Rosner.
\newblock On the synthesis of a reactive module.
\newblock In {\em Proc. of POPL}, pages 179--190. ACM Press, 1989.

\bibitem{RamadgeWonham87}
P.~J. Ramadge and W.~M. Wonham.
\newblock Supervisory control of a class of discrete-event processes.
\newblock {\em SIAM Journal of Control and Optimization}, 25(1):206--230, 1987.

\bibitem{RS10}
R.~Ramanujam and S.~E. Simon.
\newblock A communication based model for games of imperfect information.
\newblock In {\em Proc. of CONCUR}, LNCS 6269, pages 509--523. Springer, 2010.

\bibitem{Reif79}
J.~H. Reif.
\newblock Universal games of incomplete information.
\newblock In {\em Proc. of STOC}, pages 288--308, 1979.

\bibitem{Reif84}
J.~H. Reif.
\newblock {T}he complexity of two-player games of incomplete information.
\newblock {\em JCSS}, 29:274--301, 1984.

\bibitem{RP80}
J.~H. Reif and G.~L. Peterson.
\newblock A dynamic logic of multiprocessing with incomplete information.
\newblock In {\em Proc. of POPL}, pages 193--202. ACM, 1980.

\bibitem{Thomas97}
W.~Thomas.
\newblock Languages, automata, and logic.
\newblock In {\em Handbook of Formal Languages}, volume 3, Beyond Words,
  chapter~7, pages 389--455. Springer, 1997.

\end{thebibliography}
\end{document}